\newtheorem{theorem}{Theorem}
\newtheorem*{theorem*}{Theorem}
\newtheorem{result}[theorem]{Result}
\newtheorem{lemma}{Lemma}
\def\qed{\leavevmode\unskip\penalty9999 \hbox{}\nobreak\hfill
     \quad\hbox{\leavevmode  \hbox to.77778em{%
               \hfil\vrule   \vbox to.675em%
               {\hrule width.6em\vfil\hrule}\vrule\hfil}}
     \par\vskip3pt}
\begin{document}

%\title{One-Shot Detection Limits of Quantum Radar}

\title{One-Shot Detection Limits of Quantum Illumination with Discrete Signals}

%\title{Ultimate Limits of Quantum Illumination}

%\title{Ultimate Detection Limits of Quantum Illumination}

%\title{When are Entangled Signals Needed for Quantum Illumination?}

%\title{Detection Limits in Quantum illumination}

%\title{Quantum illumination Limits in Detecting Objects in Very Noisy Environments}

%\title{Limits of Quantum Illumination in Detecting Objects in Very Noisy Environments}

%\title{Optimal Performance of Quantum Illumination for Detecting Objects with a Partial Reflectivity}

%\title{Optimality of Quantum Illumination for Detecting Objects with Arbitrary Reflectivity}

%\title{Optimal Quantum Illumination with Minimum Errors in State Discrimination}

%\title{Minimum Error Discrimination in Quantum Illumination}

\author{Man-Hong Yung}
\email{yung@sustc.edu.cn}
\affiliation{Institute for Quantum Science and Engineering and Department of Physics, Southern University of Science and Technology, Shenzhen 518055, China}
\affiliation{Shenzhen Key Laboratory of Quantum Science and Engineering, Shenzhen, 518055, China}

\author{Fei Meng}
\email{mengf@mail.sustc.edu.cn}
\affiliation{Department of Physics, Southern University of Science and Technology, Shenzhen 518055, China}
\affiliation{Department of Computer Science, The University of Hong Kong, Pokfulam Road, Hong Kong}

\author{Ming-Jing Zhao}
\email{zhaomingjingde@126.com}
\affiliation{School of Science, Beijing Information
Science and Technology University, Beijing, 100192, P. R. China}

\begin{abstract}
A minimally-invasive way to detect the presence of a stealth target is to probe it with a single photon and analyze the reflected signals. The efficiency of such a conventional detection scheme can potentially be enhanced by the method of quantum illumination, where entanglement is exploited to break the classical limits. The question is, what is the optimal quantum state that allows us to achieve the detection limit with a minimal error? Here we address this question for discrete signals, by deriving a complete and general set of analytic solutions for the whole parameter space, which can be classified into three distinct regions, in the form of ``phase diagrams" for both conventional and quantum illumination. Interestingly, whenever the reflectivity of the target is less than some critical values, all received signals become useless, which is true even if entangled resources are employed. However, there does exist a region where quantum illumination can provide advantages over conventional illumination; there, the optimal signal state is an entangled state with an entanglement spectrum inversely proportional to the spectrum of the environmental state. These results not only impose fundamental limits in applications such as quantum radars, but also suggest how to become immune against the attack of minimally-invasive detection.
\end{abstract}

\pacs{03.65.Ud, 03.67.Mn}
\maketitle

%\section{Introduction}
{\bf Introduction---} One of the most important tasks in quantum information science is to understand how physical procedures related to information processing can be improved by exploiting quantum resources such as entanglement~\cite{Horodecki2009a}. Apart from the well-established applications such as quantum computation~\cite{kitaev2002classical}, simulation~\cite{Buluta2009a,Yung2012c}, teleportation~\cite{Bennett1998}, metrology~\cite{Giovannetti2006b}, etc., the area of quantum illumination~\cite{Lloyd2008b,Tan2008,Shapiro2009,Guha2009,Lopaeva2013,Barzanjeh2014,Zhang2014,Zhang2014a,Bradshaw2016,Sanz2016,Lanzagorta2016,Liu2017,Zhuang2017,Zhuang2017a,LasHeras2017} is emerging as a promising and novel quantum method for increasing the sensitivity or resolution of target detection in a way that can go beyond the classical limits. The primary goal of quantum illumination is to detect the presence or absence of a target, with potentially a low reflectivity and in a highly-noisy background, by sending out an entangled signal and performing joint (POVM) measurements. More specifically, the setup of quantum illumination consists of three parts: (i) a source emits a signal entangled with an idler system kept by an receiver; (ii) if a target exists, the receiver obtains the reflected part of the signal in addition to the background noise; otherwise, only the background noise can be received; (iii) the receiver perform a joint POVM measurement on the whole quantum system and infer from it the presence of the target. 

An intriguing feature of quantum illumination is that it is highly robust against loss and decoherence; one can still gain quantum advantages, even if the signal is applied to entanglement-breaking channels~\cite{Ruskai2003}.  As an important application, one can apply quantum illumination to secure quantum communication~\cite{Shapiro2009,Xu2011b,Zhang2013a,Ralph2013a,Shapiro2014}, where the sender encode a 0-or-1 message by controlling the presence of absence of an object and the receiver determine its presence by illuminating entangled photons; in this way, an eavesdropper who does not have access to another half of the entangled signal could virtually know nothing about the message communicated~\cite{Shapiro2009}. 

An experimental implementation~\cite{Zhang2013a} of the protocol above suggested that quantum illumination can provide a reduction up to five orders of magnitude in the bit-rate error against an eavesdropper attack. Furthermore, experimental implementations of quantum illumination have been extended from the optical domain~\cite{Zhang2013a,Lopaeva2013,Zhang2015} to the microwave domain~\cite{Barzanjeh2014}. This progress is significant, as in the optical domain, the natural (thermal) background radiation on average contains less than one photon per mode. Consequently, artificial noise is necessary to implement quantum illumination at optical wavelengths~\cite{Zhang2013a}. 

In fact, quantum illumination represents an applications of a larger class of problems called {\it quantum channel discrimination}~\cite{Harrow2010}. However, only very few analytic solutions have been discovered; in fact, quantum channel discrimination is generally a very hard computational problem~\cite{Rosgen2005}; it is complete for the quantum complexity class $\sf QIP$ (problems solvable by a quantum interactive proof system), which has been shown~\cite{Jain2010} to be equivalent to the complexity class $\sf PSPACE$ (problems solvable by classical computer with polynomial memory).

Here we show that the problem of one-shot quantum illumination, for any given parameter regime and for signals with any finite dimension, can be solved {\it completely} with a compact analytic solution. More specifically, our main results include a derivation of an analytic expression for the minimized error probability for target detection in quantum illumination, where the minimization is over all possible POVM measurements and for all possible finite-dimensional (entangled) probe states. Furthermore, the optimal state we obtained depends only on the spectral information of the environment signal; in other words, the minimized error probability can always be achieved without even knowing the reflectivity and occurrence probability of the target. 

On the other hand, quantum discord, a measure of non-classical correlation~\cite{Modi2012a}, was suggested~\cite{Weedbrook2013,Bradshaw2016} to be the reason for the quantum advantages gained by quantum illumination. However, this conclusion is not applicable to our results. In fact, the authors~\cite{Weedbrook2013} only consider completely-mixed environment; one can construct counter examples violating the conclusion of Ref.~\cite{Weedbrook2013} for general environments (see appendix).

{\bf Model of one-shot quantum illumination---} Let us first consider {\it conventional illumination}. Suppose the individual photonic state $\rho$ be described by an $d$-dimensional density matrix, and the thermal noise of the environment is denoted by, $\rho_E = \sum_{i=1} ^d \lambda_i \ket{\theta_i} \bra{\theta_i}$, where $\lambda_i$ satisfies $\lambda_1\geqslant\lambda_2...\geqslant\lambda_d$ and $\sum_{i=1}^{d}\lambda_i =1$. (i) if the target is absent, the probe signal $\rho$ is completely lost; we can only receive the noisy state from the environment, i.e., 
\begin{equation}
{{\cal E}_0}(\rho ) = {\rho _E} \ .
\end{equation}
(ii) even if the target is present, the detection may not be perfect; the reflecting portion of the signal is quantified by the {\it reflectivity}, $\eta \in [0,1]$, and the quantum channel is, 
\begin{equation}
{{\cal E}_1}(\rho ) = \eta \rho  + (1 - \eta ){\rho _E} \ .
\end{equation}
For {\it quantum illumination}, the probe signal is entangled with another subsystem, and the quantum channels are applied partially, i.e., (i) when the target is absent, $({\cal E}_0 \otimes \mathcal{I})(\rho_{AB})=\rho_E\otimes \rho_B$, where $\rho_B \equiv {\rm tr}_A \rho_{AB}$, and when it is present, $({\cal E}_1\otimes\mathcal{I})(\rho_{AB})=\eta\rho_{AB}+(1-\eta) \rho_E\otimes \rho_B $. 

\begin{figure}[t!]
    \centering
\includegraphics[width=0.9\columnwidth]{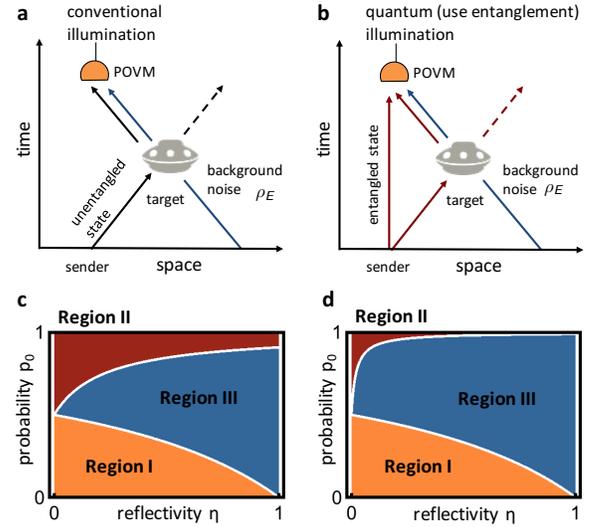}
    \caption{Conventional and quantum illumination. (a) In conventional illumination, a signal is sent to probe a target without the use of entanglement. (b) In quantum illumination, the signal is entangled, and a join POVM measurement is performed at the end to reduce the detection error. (c) The phase diagram for conventional illumination and (d) same diagram for quantum illumination.}
    \label{fig:quan_vs_class}
\end{figure}

The problem of target detection with quantum illumination can be regarded as a problem of {\it quantum channel discrimination}~\cite{Harrow2010}: given a pair of quantum states $\rho_0$ and $\rho_1$, associated with probabilities $p_0$ and $p_1$, $p_0+p_1=1$.  In connection with the problem of quantum illumination, we should take $\rho_0 = {\cal E}_0(\rho)$ and $\rho_1 = {\cal E}_1(\rho)$ for conventional illumination, and $\rho_0 = ({\cal E}_0 \otimes \mathcal{I})(\rho_{AB})$ and $\rho_1 = ({\cal E}_1\otimes\mathcal{I})(\rho_{AB})$ for quantum illumination. We are interested in finding the input states that can minimize the minimum error, given by~\cite{Helstrom1969},
\begin{equation}
{P_{{\text{err}}}} \equiv \frac{1}{2}\left( {1 - \left\| {{p_0}{\rho _0} - {p_1}{\rho _1}} \right\|} \right) \ ,
\end{equation}
where $\left\| A \right\| \equiv {\text{tr}}\sqrt {{A^\dag }A}$ denotes the trace norm of a matrix $A$. Here, the corresponding trace norms are labeled by $\left\| {{\Omega _{{c}}}(\rho )} \right\|$ and $\left\| {{\Omega_{ q}}({\rho _{AB}})} \right\|$, respectively for conventional and quantum illumination, where 
\begin{equation}
{\Omega _{{c}}}(\rho ) \equiv {p_1}{{\cal E}_1}\left( \rho  \right) - {p_0}{{\cal E}_0}\left( \rho  \right) \ ,
\end{equation}
 and 
\begin{equation}
{\Omega _{{q}}}(\rho_{AB} ) \equiv {p_1}({{\cal E}_1} \otimes \mathcal{I})({\rho _{AB}}) - {p_0}({{\cal E}_0} \otimes \mathcal{I})({\rho _{AB}}) \ .
\end{equation}
In other words, the corresponding minimum errors are given by $  P_{{{\rm err}}}^{ c,q} \equiv \frac{1}{2}\left( {1 - \left\| {{\Omega _{c,q}}\left( \rho  \right)} \right\|} \right)$. Our ultimate task is to optimize over all possible states, i.e., $\left\| {{\Omega _c}} \right\|_\diamond \equiv \mathop {\max }\nolimits_\rho  \left\| {{p_0}{{\cal E}_0}\left( \rho  \right) - {p_1}{{\cal E}_1}\left( \rho  \right)} \right\|$ for conventional illumination and ${\left\| {{\Omega _q}} \right\|_\diamond } \equiv \mathop {\max }\limits_{{\rho _{AB}}} \left\| {{p_1}({{\cal E}_1} \otimes \mathcal{I})({\rho _{AB}}) - {p_0}({{\cal E}_0} \otimes \mathcal{I})({\rho _{AB}})} \right\|$ for quantum illumination, where $P_{{\rm{err}},\diamondsuit }^c \equiv \mathop {\max }\limits_{{\rho _{AB}}} P_{{\rm{err}}}^c$ and $P_{{\rm{err}},\diamondsuit }^q \equiv \mathop {\max }\limits_{{\rho _{AB}}} P_{{\rm{err}}}^q$, or explicitly,
\begin{equation}
  P_{{\rm err}, \diamond }^c \equiv \frac{1}{2}( {1 - {{\left\| {{\Omega _c}} \right\|}_\diamond }} ) \ \ {\rm and} \ \ P_{{\rm err},\diamond}^q \equiv \frac{1}{2}( {1 - {{\left\| {{\Omega _q}} \right\|}_\diamond}} ) \ .
\end{equation}
When the two subsystems are uncorrelated, i.e., ${\rho _{AB}} \equiv {\rho _A} \otimes {\rho _B}$, the quantum case is reduced to the conventional case; therefore, it is necessarily true that quantum illumination is not worse than conventional illumination, i.e., $ P_{{\text{err}}}^q \ \leqslant \ P_{{\text{err}}}^c$ and $P_{{{\rm err},\diamond}}^q \ \leqslant \ P_{{{\rm err},\diamond}}^c$. Finally, we note that both the values of $p_0$ (and $p_1$), and the reflectivity $\eta$ can be determined in the beginning by state tomography.

{\bf Main results---}
Our major results contain a family of {\it complete} analytic solutions for both conventional and quantum illumination for any $d$-dimensional signal state and any given environmental state~$\rho_E$. For both conventional and quantum illuminations, the minimal-error probabilities are strongly dependent on the the occurrence probabilities $\{ p_0, p_1 \}$ and the reflectivity $\eta$ of the target. In general, we can divide the parameter space into three distinct regions, namely (I,II,III).

{\sf (Region I):} (i) $p_0 < p_1$,  and (ii) $\eta  < {\eta _*} \equiv 1 - {p_0}/{p_1}$. For both conventional and quantum illuminations, the minimal error is given by, 
\begin{equation}
P_{\rm err} = p_0 \ .
\end{equation}
Furthermore, the optimal strategy for both quantum and conventional illumination {\it does not even require a measurement of the signals}; one can simply guess ``yes" (present of the target) for all cases. As whenever $p_0 < p_1$, the error for this simple strategy is equal to $p_0$, i.e., $P_{\rm err} = p_0$. We summary this result as follows (the proof is left in the appendix):
\begin{result}[\bf Region I, for both conventional and quantum illuminations]\label{theo:con_pp0l12} (a) the minimal errors for conventional and quantum illumination are equal to $p_0$, i.e., ${P}_{{\rm{err}}} = p_0$, and (b) the bound can be achieved with any (pure or mixed) state.
\end{result}

{\sf (Region II):} (i) $p_0 > p_1$, and (iia) for conventional illumination: $\eta < \eta_c  \equiv (\tfrac{{{p_0}}}{{{p_1}}} - 1)(\tfrac{{{\lambda _d}}}{{1 - {\lambda _d}}})$, or (iib) for quantum illumination: $\eta  < {\eta _q} \equiv ( {\tfrac{{{p_0}}}{{{p_1}}} - 1} ) ( {\frac{{{\lambda _h}}}{{1 - {\lambda _h}}}} )$. The minimal error is given by, 
\begin{equation}
P_{\rm err} = p_1 \ ,
\end{equation}
for both conventional and quantum illuminations.  Moreover, both ${\eta _c} \to 0$ and ${\eta_q} \to 0$ vanishes as $\lambda_d \to 0$, which implies that region II vanishes for both cnventional and quantum illuminations. The same performance is achieved by guessing ``no" (i.e., absence of the target) for all events. Here 
\begin{equation}
{\lambda_h^{-1}} \equiv {\sum\limits_{i = 1}^d {\lambda _i^{ - 1}} } \ ,
\end{equation}
which is related to the harmonic mean of of the eigenvalues $\{ \lambda_i \}$ of environmental signal $\rho_E$. Note that for $\lambda_d > 0$, it is always true that $\lambda_h$ is always less than the smallest eigenvalue $\lambda_d$ of $\rho_E$, i.e., 
\begin{equation}
{\lambda _h} < {\lambda _d} \ ,
\end{equation}
 (because ${\lambda _d}/{\lambda _h} = {\lambda _d}(\sum\nolimits_{i = 1}^d {1/{\lambda _i}} ) > 1$). Therefore, the region II for the case of quantum illumination is { always} {\it smaller} than that of conventional illumination. (see Fig.~\ref{fig:quan_vs_class}). To summarize (see proof in appendix), we have
\begin{result}[\bf Region II for conventional and quantum illumination]\label{theo_con_p0g12}
(a) the minimal error for conventional and quantum illumination is equal to $p_1$, i.e., ${P}_{{\rm{err}}} = p_1$, and (b) the bound can be achieved with any (pure or mixed) state.
\end{result}

{\sf (Region III):} (the region excluded by region I and II) {For conventional illumination, the minimal error over all possible input states is given by, 
\begin{equation}
P_{{{\rm err},\diamond}}^c = {p_0} + \gamma \left( {1 - {\lambda _d}} \right) \ ,
\end{equation}
and for quantum illumination,
\begin{equation}\label{RegionIII_P_err}
P_{{\text{err},\diamond}}^q = {p_0} + \gamma \left( {1 - {\lambda_h}} \right) \ .
\end{equation}
} Here the parameter, $\gamma  \equiv {p_1}(1 - \eta ) - {p_0}$, depends on the occurrence probabilities $\{ p_0, p_1 \}$ of the target and the reflectivity~$\eta$. In this region, $\gamma <0$ is negative and a decreasing function of $\eta$, which implies that both $P_{{{\rm err},\diamond}}^c$ and $P_{{{\rm err},\diamond}}^c$ decrease with the increase of the reflectivity~$\eta$. Furthermore, the difference between the classical and quantum cases (i.e., quantum advantage) depends on the difference, ${{\lambda _d} - \lambda_h }$ , i.e.,
\begin{equation}
P_{{\text{err}},\diamond }^c - P_{{\text{err}},\diamond }^q = \left| \gamma  \right|\left( {{\lambda _d} - {\lambda _h}} \right) \geq 0 \ .
\end{equation}
 For conventional illumination, the input state that can minimize the detection error is given by the eigenstate $\ket{\theta_d}$ of $\rho_E$ associated with the smallest eigenvalue $\lambda_d$. To summarize (see proof in appendix), we have
\begin{result}[\bf Region III: minimal error decreases with reflectivity $\eta$  for conventional illumination]\label{RegionIII_class_Perr}
The minimal error $P_{\rm err}$	over all possible conventional input states is given by $P_{{\rm{err}}}^c = {p_0} + \gamma \left( {1 - {\lambda _d}} \right)$, which is obtained by choosing $\left| \psi  \right\rangle  = \left| {{\theta_d}} \right\rangle $ to be the eigenvector of $\rho_E$ associated with the smallest eigenvalue.
\end{result}

To understand the result (Eq.~(\ref{RegionIII_P_err})) of the minimal error for quantum illumination in Region III, we summarize the steps for achieving it below:

{\bf Sketch of the proofs for quantum illumination---} The main physical quantity to be investigated is: $  {\Omega _q}\left( {{\rho _{AB}}} \right) = {p_1}\eta {\rho _{AB}} + \gamma {\rho _E} \otimes {\rho _B}$. For convenience, we can focus on the following matrix, ${H_q} \equiv - {\Omega _q}/ \gamma $, where 
\begin{equation}
H_q = {\rho _E} \otimes {\rho _B} - \alpha \left| \psi  \right\rangle \left\langle \psi  \right| \ ,
\end{equation}
and $\alpha  =  - \eta {p_1}/\gamma  = \eta {p_1}/\left( {{p_0} - {p_1}\left( {1 - \eta } \right)} \right) > 0$. Furthermore, we express the bipartite pure state in the following form: $\left| \psi  \right\rangle  = \sum\nolimits_{i = 1}^d {\left| {{\theta _i}} \right\rangle \left| {{u_i}} \right\rangle }$. where the vectors $ \ket{u_i}$'s are {\it not} assumed to be normalized. In general, they are non-orthogonal to one another. However, since the eigenvalues $\left| {{\theta _i}} \right\rangle$'s are orthonormal, the normalization condition implies that $\left\langle \psi  \right.\left| \psi  \right\rangle  = \sum\nolimits_{i = 1}^d {\left\langle {{u_i}} \right.\left| {{u_i}} \right\rangle }  = 1$.  

The next task is to bound the minimum eigenvalue, $E_g \equiv \lambda_{\rm min} (H_q)$ of the matrix $H_q$. The corresponding eigenvector $\left| {{g_{\psi}}} \right\rangle $, where ${H_q}\left| {{g_\psi }} \right\rangle  = {E_g}\left| {{g_\psi }} \right\rangle $, can always be expanded by the the following vectors (in a way similar to $\ket{\psi}$), $\left| {{g_\psi }} \right\rangle  = \sum\nolimits_{i = 1}^d {\left| {{\theta _i}} \right\rangle \left| {{v_i}} \right\rangle } $, where, again, the vectors $\left| {{v_i}} \right\rangle$'s are neither normalized nor orthogonal to one another. We found that (see appendix) the eigenvalue is minimized when we choose $\left| {{u_i}} \right\rangle  = \left| {{v_i}} \right\rangle$ for all $i$'s, which means that $\left| {{g_\psi }} \right\rangle  = \left| \psi  \right\rangle$. Finally, we found that the minimum eigenvalue, $E_g = \lambda_h - \alpha$, of $H_q$ can be achieved by choosing an input of the form (see appendix), 
\begin{equation}
\left| \psi  \right\rangle  = \sum\limits_{i = 1}^d {{\mu _i}} \left| {{\theta _i}} \right\rangle \left| {{\theta _i}} \right\rangle \ ,
\end{equation}
where ${\mu _i} = \sqrt {{\lambda _h}/{\lambda _i}} $; this result is summarized as follows:
\begin{result}[\bf Optimal state for quantum illumination]
The lower bound, $\lambda_h - \alpha$, of $E_g$ can be achieved 	by the input state, $  \left| \psi  \right\rangle  = \sum\nolimits_{i = 1}^d {{\mu _i}} \left| {{\theta _i}} \right\rangle \left| {{\theta _i}} \right\rangle$, where ${\mu _i} = \sqrt {{\lambda _h}/{\lambda _i}} $.
\end{result}

Below, we provide three different examples to illustrate our results.

{\bf Example 1: when $\lambda_{\rm min} = 0$.} For the eigenstate $\left| \psi  \right\rangle$ of ${\Omega _{{c}}}(\left| \psi  \right\rangle \langle \psi |) = {p_1}\eta \left| \psi  \right\rangle \langle \psi | + \gamma \, {\rho _E}$, i.e., $\left\langle \psi  \right|{\rho _E}\left| \psi  \right\rangle  = 0$. The error probability is ${P_{{\text{err}}}} = \frac{1}{2}[1 - \left( {{p_1}\eta  + \left| \gamma  \right|} \right)]$, which means that (i) when $\eta \le \eta_*$ (or $\gamma \ge 0$), then $P_{\rm err} = p_0$, and (ii) when  $\eta \ge \eta_*$ (or $\gamma \le 0$), then ${P_{\rm err}} = {p_1}\left( {1 - \eta } \right)$, which vanishes as expected when $\eta \to 1$.

{\bf Example 2: binary signals} Let us consider the case where the signals are two-dimensional, which means that $\rho_E$ is a $2 \times 2$ Hermitian matrix. In its diagonal basis (labeled as $\left\{ {\left| 0 \right\rangle ,\left| 1 \right\rangle } \right\}$), we write $ \rho_E=\bigl[\begin{smallmatrix}
\lambda_0 & 0 \\ 0 & \lambda_1
\end{smallmatrix} \bigr]$.
%\begin{equation}
%  {\rho _E} \equiv \left[ {\begin{array}{*{20}{c}}
%  {{\lambda _0}}&0 \\
%  0&{{\lambda _1}}
%\end{array}} \right] \ .
%\end{equation}
Since the trace norm is invariant under unitary transformation, we can always choose to have the pure state to be optimized as follows: $\left| \psi  \right\rangle  = {\mu _0}\left| 0 \right\rangle  + {\mu _1}\left| 1 \right\rangle$.
%\begin{equation}
%\left| \psi  \right\rangle  = {\mu _0}\left| 0 \right\rangle  + {\mu _1}\left| 1 \right\rangle   \ ,
%\end{equation}
where both parameters, ${\mu _0} \ge 0 $ and ${\mu _1} \ge 0$, are positive, and $\mu _0^2 + \mu _1^2 = 1$. Consequently, we have $ {\Omega _{{c}}} =\bigl[\begin{smallmatrix}
a & c \\ c & b
\end{smallmatrix} \bigr]$,
%\begin{equation}
%  {\Omega _{\text{c}}} = \left[ {\begin{array}{*{20}{c}}
%  a&c \\
%  c&b
%\end{array}} \right] \ ,
%\end{equation}
where $a \equiv {p_1}\eta \mu _0^2 + \gamma {\lambda _0}$, $b \equiv {p_1}\eta \mu _1^2 + \gamma {\lambda _1}$, and $c \equiv {p_1}\eta {\mu _0}{\mu _1}$.
The eigenvalues ${\lambda _ \pm }$ of ${\Omega _{{c}}}$ are given by ${\lambda _ \pm } = \frac{1}{2}\left[ {{\text{tr}}{\Omega _{{c}}} \pm \sqrt {{\text{t}}{{\text{r}}^2}{\Omega _{{c}}} - 4\det {\Omega _{{c}}}} } \right]$,
%\begin{equation}
%{\lambda _ \pm } = \frac{1}{2}\left[ {{\text{tr}}{\Omega _{\text{c}}} \pm \sqrt {{\text{t}}{{\text{r}}^2}{\Omega _{\text{c}}} - 4\det {\Omega _{\text{c}}}} } \right]  \ ,
%\end{equation}
where trace and determinant of $\Omega_{ c}$ are $ {\text{tr}} \ {\Omega _{{c}}} = a + b = {p_1}\eta  + \gamma = p_1 -p_0$, $\det {\Omega _{{c}}} = ab - {c^2} = {\gamma ^2}{\lambda _0}{\lambda _1} + {p_1}\eta \gamma \left( {{\lambda _0}\mu _1^2 + {\lambda _1}\mu _0^2} \right)$. The trace norm of ${\Omega _{{c}}}$ is given by one of the following possibility:
\begin{equation}
\left\| {{\Omega _{{c}}}} \right\| = \left\{ {\begin{array}{*{20}{c}}
  {\left| {{\text{tr}} \ {\Omega _{{c}}}} \right|}&{{\text{if }}\det {\Omega _{\text{c}}} \geqslant 0} , \\
  {\sqrt {{\text{t}}{{\text{r}}^2}{\Omega _{{c}}} - 4\det {\Omega _{{c}}}} }&{{\text{if }}\det {\Omega _{{c}}} < 0} .
\end{array}} \right.
\end{equation}
Note that $\det \Omega_c$ is a product of the two eigenvalues; the condition of $\det {\Omega _{\text{c}}} > 0$ implies that either both eigenvalues are positive or both negative.

{\bf Example 3: completely-mixed environment} Suppose the returning signal from the noisy environment is completely mixed, i.e., $\rho_E=I/d$, the corresponding matrix ${\Omega _{{c}}}(\left| \psi  \right\rangle \langle \psi |)$, for any pure state $\ket{\psi}$, can be diagonalized explicitly to give $\left\| {{\Omega _{{c}}}(\left| \psi  \right\rangle \langle \psi |)} \right\| = \left| {{p_1}\eta  + \frac{\gamma }{d}} \right| + \frac{{d - 1}}{d}\left| \gamma  \right|$. In region I, where $p_0\leq\frac{1}{2}$ and $\eta \le \eta_* \equiv 1 - p_0/p_1$, we have $\gamma \equiv {p_1}(1 - \eta ) - {p_0} \geq 0$. As a result, $\left\| {{\Omega _c}} \right\| = |{p_1}\eta  + \gamma|  = \left| {{p_1} - {p_0}} \right|$ and hence $P_{\rm err}^c = p_0$. On the other hand, in region~II, $p_0>\frac{1}{2}$ (where $\gamma < 0$) and $\eta  < ( {\frac{{{p_0}}}{{{p_1}}} - 1} )\frac{1}{{d - 1}}$ (where ${p_1}\eta  + \gamma /d < 0$), we have again $\left\| {{\Omega _c}} \right\| = \left| {{p_1}\eta  + \gamma } \right| = \left| {{p_1} - {p_0}} \right|$, but it gives 
$P_{\rm err}^c = p_1$. In region III, $\left\| {{\Omega _c}} \right\| =  {{p_1}\eta  + \left( {2/d - 1} \right)\gamma } $, which gives $P_{{\rm{err}}}^c = {p_1}\left( {1 - \eta } \right) - \gamma /d$.

\begin{figure}[t!]
    \centering
\includegraphics[width=0.9\columnwidth]{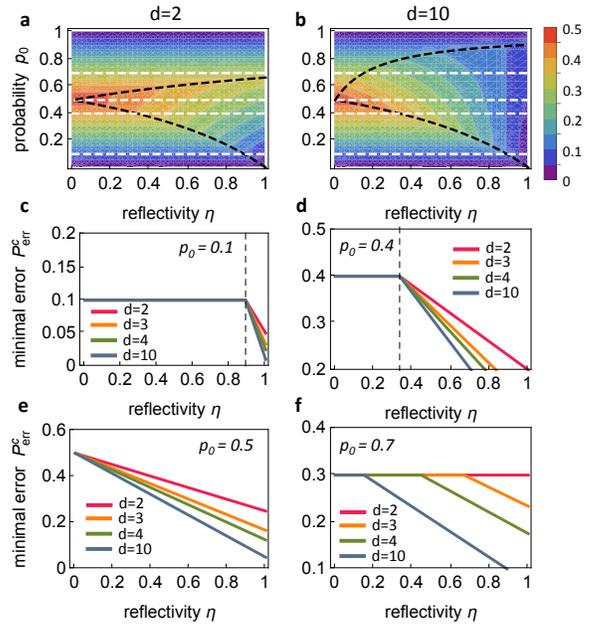}
    \caption{Behavior of conventional illumination in a completely-mixed environment $\rho_E=I/d$. (a) Color plot of the minimal-error $P_{\rm err}^c$ as a function of the occurence probability $p_0$ and reflectivity $\eta$ for two-dimensional signals $d=2$. (b) the same plot for $d=10$. Explicit dependence of $P_{\rm err}^c$ as a function of $\eta$ are shown in (c)-(f).}
    \label{fig:classicalIllu}
\end{figure}

{\bf Complementary results---} 
If for all the events, we simply guess `{\sf yes}' (i.e., presence of the target) whenever $p_0 < p_1$ and `{\sf no}' (i.e., absent) whenever $p_0 > p_1$. Then, the error for guessing wrong is  given by the minimum of the probabilities $p_0$ or $p_1$, i.e., $\min \left\{ {{p_0},{p_1}} \right\}$. For example, the instance shown below shows that the number of wrong decisions (i.e., `{\sf yes}' when the object is absent `0') is equal to the number of absent events `0'.
\begin{equation*}
\boxed{\begin{array}{*{20}{c}}
  0 \\ 
  {\sf yes} 
\end{array}}\begin{array}{*{20}{c}}
  1 \\ 
  {\sf yes} 
\end{array}\boxed{\begin{array}{*{20}{c}}
  0 \\ 
  {\sf yes} 
\end{array}}\begin{array}{*{20}{c}}
  1 \\ 
  {\sf yes} 
\end{array}\boxed{\begin{array}{*{20}{c}}
  0 \\ 
  {\sf yes} 
\end{array}}\begin{array}{*{20}{c}}
  1 \\ 
  {\sf yes} 
\end{array}\boxed{\begin{array}{*{20}{c}}
  0 \\ 
  {\sf yes} 
\end{array}}\begin{array}{*{20}{c}}
  1 \\ 
  {\sf yes} 
\end{array}\begin{array}{*{20}{c}}
  1 \\ 
  {\sf yes} 
\end{array}
\end{equation*}

This argument can be justified more formally for both classical and quantum illuminations with the following, where the proof is given in the appendix, 
\begin{result}[\bf Upper bound of minimal error]
The error probability $P_{\rm err}$ is bounded above by either $p_0$ or $p_1$, i.e., ${P_{{\rm{err}}}} \leqslant \frac{1}{2}\left( {1 - \left| {{p_0} - {p_1}} \right|} \right) = \min \left\{ {{p_0},{p_1}} \right\} $.
\end{result}
This error bound is relevant to the cases where the reflectivity $\eta$ is zero, i.e., ${\rho _0} = {\rho _1}$ for both conventional and quantum illumination, which gives ${P_{\rm err}} = \tfrac{1}{2}\left( {1 - |{p_0} - {p_1}|} \right)$. In other words, when there is no signal related to the absence/presence of the target, the best strategy one can make to minimize the error of discrimination is exactly the strategy mentioned above.

Another interesting question is how the reflectivity $\eta$ affect the error bound. Intuitively, we would believe that the higher the value of $\eta$, the smaller the error bound. This intuition can be justified by the following theorem (proof in the appendix):
\begin{result}[\bf Monotonicity of minimal error]
For a given reflectivity $\eta$, and density matrix $\rho$, and the minimal error given by, ${P_{err}}\left(\rm  \eta  \right) = \tfrac{1}{2}\left( {1 - \left\| {{p_1}{\rho _1}\left( \eta  \right) - {p_0}{\rho _0}} \right\|} \right)$, where ${\rho _1}\left( \eta  \right) = \eta \rho  + \left( {1 - \eta } \right){\rho _0}$. The minimal error is a non-increasing function of the reflectivity, i.e., if $\eta  \geqslant \eta '$, then $  {P_{\rm err}}\left( \eta  \right) \leqslant {P_{\rm err}}\left( {\eta '} \right)$.
\end{result}

On the other hand, the optimization can be taken over pure states only (see proof in appendix).
\begin{result}[\bf Optimization over pure states]\label{theo:optim_purestate}
In optimizing the trace norms of $\Omega_{c} (\rho)$ or $\Omega_{q} (\rho)$, over all possible density matrices $\rho$, $\mathop {\max }\nolimits_{\rho  \in \mathcal{H}} \left\| {{ \, \Omega _{\text{c}}}  (\rho )} \, \right\| = \mathop {\max }\nolimits_{|\psi \rangle  \in \mathcal{H}} \left\| {{ \, \Omega _{\text{c}} }\, (\left| \psi  \right\rangle \left\langle \psi  \right|)} \, \right\| $, and for quantum illumination, $\mathop {\max }\nolimits_{\rho_{AB}  \in \mathcal{H}\otimes\mathcal{H}} \left\| {{ \, \Omega _{\text{q}}}  (\rho_{AB} )} \, \right\| = \mathop {\max }\nolimits_{|\psi_{AB} \rangle  \in \mathcal{H}\otimes\mathcal{H}} \left\| {{ \, \Omega _{\text{q}} }\, (\left| \psi_{AB}  \right\rangle \left\langle \psi_{AB}  \right|)} \, \right\|$.
\end{result}

%%%%%%%%%%%%%%%%%%%%%%%%%%%%%%
{\bf Conclusions---}
In this work, we presented complete solutions to the problem of one-shot minimum-error discrimination for both conventional and quantum illuminations, for finite-dimensional signals. The analysis is divided into three regions. Region I are the same for both conventional and quantum illumination; the minimal error is a constant and does not depend on the reflectivity of the target，the optimal strategy is achieved via simple guess. The same is similar for region II, except that using quantum illumination can shrink the boundary of region II. For region III, quantum illumination can yield a lower minimal error than conventional illumination. 

\section{Acknowledgement}
We thank Cheng Guo and Mile Gu for useful discussions. This work is supported by the NSF of China under Grant No. 11401032 and Scientific Research Foundation for the Returned Overseas Chinese Scholars, State Education Ministry, the Guangdong Innovative and Entrepreneurial Research Team Program (No. 2016ZT06D348), and the Science Technology and Innovation Commission of Shenzhen Municipality (ZDSYS20170303165926217, JCYJ20170412152620376). M.-H.Y acknowledges support from the National Youth 1000 Talents Plan.

\bibliography{references}

%\onecolumn

\section*{Appendix: Proofs of the theorems}

\section*{Results for region I $\&$ II}

%\begin{theorem*}[\bf Region I for both classical and quantum illuminations]\label{theo:con_pp0l12}
{\bf Result: Region I for both classical and quantum illuminations}
Suppose 
\begin{itemize}
\item[(i)] $p_0 \le 1/2$, and
\item[(ii)] $\eta \le \eta_* \equiv 1 - p_0/p_1$ (or equivalently, $\gamma \ge 0$),
\end{itemize}
then (a) the minimal errors for conventional illumination and quantum illumination are equal to $p_0$, i.e.,
\begin{equation}\label{Pcerr_p0}
  {P}_{{\rm{err}}} = p_0 \ ,
\end{equation}
and (b) the bound can be achieved with any (pure or mixed) state.
%\end{theorem*}

\begin{proof}
If $p_0 \le 1/2$ and $\eta \le \eta_*$, we have $\gamma  \geqslant 0$, which also implies that the Hermitian matrix, ${\Omega _{\text{c(q)}}}$, is a positive sum of two density matrices (with positive eigenvalues). 

Consequently, all eigenvalues ${\lambda _i} = \left\langle i \right|{\Omega _{\text{c(q)}}}\left| i \right\rangle $, with an eigenvector~$\left| i \right\rangle$, are positive, i.e., ${\lambda _i} \geqslant 0$. In this case the trace norm of ${{\Omega _{\text{c(q)}}}}$ can be obtained directly by taking the trace, i.e., 
\begin{equation}
\left\| {{\Omega _{\text{c(q)}}}} \right\| = \left| {{p_1}\eta  + \gamma } \right| = {p_1} - {p_0} \ ,
\end{equation}
which implies the result stated in Eq.~(\ref{Pcerr_p0}). Note that the whole argument is applicable to any pure state.
\end{proof}

%\begin{theorem*}[\bf Region II for conventional illumination]\label{theo_con_p0g12}
{\bf Result: Region II for conventional illumination}
Suppose 
\begin{itemize}
\item[(i)] ${p_0} \geqslant 1/2$, and
\item[(ii)]  $\eta  \leqslant -\eta_*( {\tfrac{{{\lambda _{\min }}}}{{1 - {\lambda _{\min }}}}} )$ ,
\end{itemize}
with ${\lambda _{\min }} = {\lambda _{\min }}\left( {{\rho _E}} \right) \geqslant 0$, the minimal eigenvalue of the environment signal $\rho_E$, then (a) the minimal error for conventional illumination is equal to $p_1$, i.e.,
 \begin{equation}\label{Pcerr_p1}
  {P}_{{\rm{err}}}^{{c}} = p_1 \ ,
\end{equation}
and (b) the bound can be achieved with any (pure or mixed) state.
%\end{theorem*}

\begin{proof}
Let's consider the conventional illumination first. We now express $\Omega_{\rm c}$ as follows： 
\begin{equation}
  {\Omega _{\text{c}}} = \left( {{p_1}\eta  + \gamma {\lambda _{\min }}} \right)\rho + \gamma \,\left( {{\rho _E} - {\lambda _{\min }}\rho} \right) \ .
\end{equation}
Note that the matrix, ${{\rho _E} - {\lambda _{\min }}\rho}$, contains non-negative eigenvalues. Now suppose the following conditions are satisfied, (i) ${p_0} \geqslant 1/2$ (or equivalently ${p_0} \geqslant {p_1}$), which implies that $\gamma  \equiv {p_1}\left( {1 - \eta } \right) - {p_0} \leqslant {p_0}\left( {1 - \eta } \right) - {p_0} =  - {p_0}\eta  \leqslant 0$, and (ii) $\eta  \leqslant ( {\tfrac{{{p_0}}}{{{p_1}}} - 1} )( {\tfrac{{{\lambda _{\min }}}}{{1 - {\lambda _{\min }}}}} )$, which further implies that 
\begin{equation}
{p_1}\eta  + \gamma {\lambda _{\min }} \leqslant 0 \ ,
\end{equation}
then the trace norm of ${\Omega _c}$ is given by the trace of ${\Omega _c}$, i.e., 
\begin{equation}
\left\| {{\Omega _c}\left( {\rho} \right)} \right\| = \left| {{p_1}\eta  + \gamma } \right| = {p_0} - {p_1}\ ,
\end{equation}
which implies the result in Eq.~(\ref{Pcerr_p1}). The proof for quantum case is similar.
\end{proof}

%%%%%%%%%%%%%%%%%%%%%%%%%%%%
\section*{Result for region III}

For conventional illumination in region III, we have $\gamma < 0$. Therefore, we can always write 
\begin{equation}
  {\Omega _c}\left( {\left| \psi  \right\rangle \left\langle \psi  \right|} \right) \propto {\rho _E} - \alpha \left| \psi  \right\rangle \left\langle \psi  \right| \ ,
\end{equation}
for some $\alpha > 0$. We shall see that (i) the trace norm of $\Omega_c$ is determined by the minimum eigenvalue of the matrix
\begin{equation}
{\rho _E} - \alpha \left| \psi  \right\rangle \left\langle \psi  \right| \ ,
\end{equation}
and (ii) the smallest eigenvalue is minimized by choosing the signal state as the eigenstate $\left| {{e_k}} \right\rangle $ with the minimal eigenvalue. These results come from the following lemmas.

\begin{lemma}[\bf Positivity of eigenvalues I]
Suppose $\left\langle \psi  \right|\left. {{e_k}} \right\rangle  \ne 0$ for some $k$'s, the eigenvalues, namely ${E_1} \geqslant {E_2} \geqslant ... \geqslant {E_d}$, of a $d$-dimensional matrix of the form, $\rho  - \alpha \left| \psi  \right\rangle \left\langle \psi  \right|$, where $\alpha>0$ and $\left\langle \psi  \right|\rho \left| \psi  \right\rangle  \ne 0$, can have at most one positive eigenvalue, i.e., ${E_1} \geqslant {E_2} \geqslant ... \geqslant {E_{d - 1}} \geqslant 0$ but the smallest eigenvalue ${E_d}$ may be positive, negative, or zero.
\end{lemma}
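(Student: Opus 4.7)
The plan is to invoke the Courant--Fischer (min--max) characterization of the eigenvalues of a Hermitian operator, applied to the natural $(d-1)$-dimensional subspace orthogonal to $\ket{\psi}$. Concretely, set $H \equiv \rho - \alpha \ket{\psi}\bra{\psi}$ and $W \equiv \{\ket{\psi}\}^{\perp}$, so $\dim W = d-1$. For any unit vector $\ket{x} \in W$ the rank-one correction drops out because $\langle x|\psi\rangle = 0$, leaving
\begin{equation}
\bra{x} H \ket{x} \;=\; \bra{x} \rho \ket{x} \;\geq\; 0,
\end{equation}
which uses only the positive semidefiniteness of the density matrix $\rho$.

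Next I would apply the max--min formula
\begin{equation}
E_{d-1} \;=\; \max_{V:\,\dim V = d-1}\ \min_{\ket{x}\in V,\,\|x\|=1}\ \bra{x} H \ket{x},
\end{equation}
with the particular choice $V = W$. This immediately gives $E_{d-1} \geq 0$, and since the eigenvalues are labelled in decreasing order, $E_1 \geq E_2 \geq \dots \geq E_{d-1} \geq 0$. Only the bottom eigenvalue $E_d$ is left without a sign constraint, which is precisely what the statement demands: a negative rank-one perturbation of a positive semidefinite operator can depress at most one eigenvalue below zero.

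The non-degeneracy hypotheses $\langle\psi|e_k\rangle \neq 0$ for some $k$ and $\langle\psi|\rho|\psi\rangle \neq 0$ are not needed to establish $E_{d-1}\geq 0$; their role is to guarantee that $\ket{\psi}$ is not orthogonal to the support of $\rho$ and has nontrivial overlap with at least one eigenvector, so the rank-one term actually moves the spectrum and $E_d$ can genuinely become strictly negative (otherwise $H=\rho$ on the relevant invariant subspace and the lemma would be vacuous). The main thing to be careful about is the sign convention: it is precisely the negativity of $-\alpha\ket{\psi}\bra{\psi}$ with $\alpha>0$ that singles out $E_d$, and no other eigenvalue, as the only candidate for negativity; with the opposite sign the same argument would instead leave $E_1$ unconstrained and force $E_2,\dots,E_d \geq 0$. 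No serious obstacle is anticipated — the entire argument is a one-line application of min--max on the orthogonal complement of $\ket{\psi}$.
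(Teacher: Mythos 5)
Your proof is correct, but it takes a different route from the one the paper attaches to this lemma. The paper's proof of ``Positivity of eigenvalues I'' is a secular-equation argument: it rewrites the eigenvalue equation as $1/\alpha = \sum_k |\langle\theta_k|\psi\rangle|^2/(\lambda_k - E)$ and observes that the right-hand side is strictly increasing on $(-\infty,0)$, so at most one root can be negative. Your Courant--Fischer argument on the hyperplane $W=\{\ket{\psi}\}^{\perp}$ is closer in spirit to the paper's \emph{alternative} proof (``Positivity of eigenvalues II''), which derives a contradiction from two negative eigenvalues by finding a combination of their eigenvectors orthogonal to $\ket{\psi}$ on which $\rho$ would have to be negative; your version is the same idea run forward through the max--min formula rather than by contradiction, and it is arguably the cleanest of the three. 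You are also right that the hypotheses $\langle\psi|e_k\rangle\neq 0$ and $\langle\psi|\rho|\psi\rangle\neq 0$ are not needed for the conclusion $E_{d-1}\geq 0$ --- they are artifacts of the paper's secular-equation route, where one must divide by $\langle\psi|e_k\rangle$ and invert $\rho-E_kI$; your argument dispenses with these genericity assumptions entirely. What the paper's approach buys in exchange is quantitative information: the secular equation locates \emph{all} eigenvalues as roots interlacing the $\lambda_k$, which is reusable downstream, whereas min--max only delivers the sign constraint. Either way, note that the lemma's phrase ``at most one positive eigenvalue'' is evidently a typo for ``at most one negative eigenvalue,'' which is what both you and the displayed conclusion actually prove.
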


\begin{proof}
Consider the eigenvalue equation for the same matrix, i.e., $\left( {\rho  - \alpha \left| \psi  \right\rangle \left\langle \psi  \right|} \right)\left| {{e_k}} \right\rangle  = {E_k}\left| {{e_k}} \right\rangle $, for any $k \in \{1,2,3,..,d \}$, which can be written as 
\begin{equation}
\left( {\rho  - {E_k}I} \right)| {{e_k}} \rangle  = \alpha \left| \psi  \right\rangle \left\langle \psi  \right| {{e_k}} \rangle
\end{equation}
or $\left| {{e_k}} \right\rangle  = \alpha {\left( {\rho  - {E_k}I} \right)^{ - 1}}\left| \psi  \right\rangle \left\langle \psi  \right|\left. {{e_k}} \right\rangle $. Now, as $\left\langle \psi  \right|\left. {{e_k}} \right\rangle  \ne 0$, we therefore have the following relation: 
\begin{equation}
1 = \alpha \left\langle \psi  \right|{\left( {\rho  - {E_k}I} \right)^{ - 1}}\left| \psi  \right\rangle \ .
\end{equation}
Therefore, in terms of the eigenvalues $\lambda_k$ and eigenvectors ${\theta_k}$ of $\rho  = \sum\nolimits_k {{\lambda _k}} \left| {{\theta _k}} \right\rangle \left\langle {{\theta _k}} \right|$, the eigenvalues $E_k$ are the roots of the equation.
\begin{equation}
\frac{1}{\alpha } = \sum\limits_k {\frac{{|\left\langle {{\theta _k}} \right.\left| \psi  \right\rangle {|^2}}}{{{\lambda_k} - E}}} \ .
\end{equation}
Now, the right-hand side increases monotonically from zero as $E$ increases from $ - \infty $ to zero. Therefore, depending on the value of $\alpha$, there can be, at most, one negative eigenvalue for the matrix $\rho  - \alpha \left| \psi  \right\rangle \left\langle \psi  \right|$.
\end{proof}

The same result can be derived in an alternatively way, as follows.

\begin{lemma}[\bf Positivity of eigenvalues II]
	Let $\rho$ be a $d$-dimentional density matrix, let $| \psi \rangle$ be a
	pure quantum state, and $\alpha > 0$ a real number. Then there is at most one negative eigenvalue of the operator $\rho - \alpha
	\left| \psi \rangle \langle \psi \right|$. 
\end{lemma}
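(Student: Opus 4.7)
The plan is to avoid replaying the secular-equation argument of Lemma 1 and instead appeal directly to the variational (Courant--Fischer) characterization of eigenvalues. The key observation is that $\rho - \alpha|\psi\rangle\langle\psi|$ is a rank-one negative perturbation of the positive semidefinite operator $\rho$, and such perturbations cannot push more than one eigenvalue below any given threshold.

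Concretely, I would order the eigenvalues $E_1 \geq E_2 \geq \cdots \geq E_d$ and aim to show $E_{d-1} \geq 0$. By the max--min principle,
\begin{equation*}
E_{d-1} \;=\; \max_{\dim V = d-1}\; \min_{|\phi\rangle \in V,\,\|\phi\|=1}\, \langle\phi|(\rho - \alpha|\psi\rangle\langle\psi|)|\phi\rangle.
\end{equation*}
Choosing the particular subspace $V_\psi = |\psi\rangle^{\perp}$ of dimension $d-1$ makes $\langle\phi|\psi\rangle = 0$ for every $|\phi\rangle \in V_\psi$, so the rank-one term drops out and the minimand reduces to $\langle\phi|\rho|\phi\rangle \geq 0$ by positive semidefiniteness of $\rho$. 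Since this single admissible choice of $V$ already produces a nonnegative value, the maximum over all $(d-1)$-dimensional subspaces is at least zero, yielding $E_{d-1} \geq 0$ and hence at most one negative eigenvalue.

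A slightly shorter alternative uses Weyl's inequality for rank-one updates (Cauchy interlacing): the eigenvalues of $\rho - \alpha|\psi\rangle\langle\psi|$ interlace those of $\rho$ via $\lambda_k \geq E_k \geq \lambda_{k+1}$ for $1 \leq k \leq d-1$; since $\lambda_d \geq 0$, this immediately gives $E_{d-1} \geq \lambda_d \geq 0$. I would present whichever of the two routes requires the fewer outside citations, and phrase the lemma so that the distinction from Lemma 1 is clear: Lemma 1 located the negative eigenvalue explicitly via a secular equation, whereas here we only need its \emph{uniqueness}, which is a purely structural fact about rank-one perturbations of positive operators.

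There is no substantial obstacle; the only care needed is a clean statement of the variational principle (or interlacing inequality) so that the one-line conclusion does not obscure which ingredient is doing the work. A minor remark I would include is that this same subspace argument explains why the eigenvector of the (possibly) negative eigenvalue must have nonvanishing overlap with $|\psi\rangle$, which ties back to the optimization of $H_q$ in the main text.
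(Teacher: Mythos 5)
Your proof is correct, but it packages the key fact differently from the paper. The paper argues by contradiction: it assumes two negative eigenvalues $E_d < E_{d-1} < 0$, takes the two-dimensional span of the corresponding eigenvectors, finds inside it a nonzero vector $|\psi^\perp\rangle$ orthogonal to $|\psi\rangle$, and observes that $\langle\psi^\perp|(\rho-\alpha|\psi\rangle\langle\psi|)|\psi^\perp\rangle = \langle\psi^\perp|\rho|\psi^\perp\rangle$ would then be negative, contradicting positive semidefiniteness of $\rho$. You instead invoke the Courant--Fischer max--min principle with the single test subspace $|\psi\rangle^{\perp}$ (or, equivalently, Weyl/Cauchy interlacing for a rank-one update) to conclude $E_{d-1}\geq 0$ directly. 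The two arguments rest on the identical structural observation --- the rank-one perturbation annihilates everything orthogonal to $|\psi\rangle$, where $\rho$ alone is nonnegative --- but yours is the positive, quantitative formulation: it delivers the stronger bound $E_{d-1}\geq\lambda_d\geq 0$ rather than merely the impossibility of two negative eigenvalues, and it cites a standard theorem in place of a hand-rolled contradiction. The paper's version is more self-contained (no external variational machinery), while yours is shorter and generalizes immediately to rank-$r$ perturbations giving at most $r$ negative eigenvalues. Your closing remark that a negative-eigenvalue eigenvector must overlap $|\psi\rangle$ is also correct and is exactly the hypothesis $\langle\psi|e_k\rangle\neq 0$ that the paper's Lemma~1 (the secular-equation version) needs, so it ties the two lemmas together nicely.
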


\begin{proof}	
	Suppose we can find two distinct negative eigenvalues such that $E_d < E_{d - 1} < 0$. Consider
	the subspace spanned by the corresponding eigenvectors, $V = Span \left\{ |
	e_d \rangle, | e_{d - 1} \rangle \right\}$, of the matrix $\rho - \alpha
	\left| \psi \rangle \langle \psi \right|$.  Clearly, there exists a linear 	combination, denoted by
\begin{equation}
\left| {{\psi ^ \bot }} \right\rangle  \equiv \beta  | e_d \rangle + \theta  | e_{d -
		1} \rangle \neq 0 \ ,
\end{equation}	
which is orthogonal to $| \psi \rangle$, i.e., $\left\langle \psi  \right.\left| {{\psi ^ \bot }} \right\rangle  = 0$. Then, $\rho -
	\alpha  | \psi \rangle \langle \psi |$ maps $| {\psi ^ \bot } \rangle$ to $\rho | {\psi ^ \bot }
	\rangle$, i.e.,
\begin{equation}
\left( {\rho  - \alpha |\psi \rangle \langle \psi |} \right)\left| {{\psi ^ \bot }} \right\rangle  = \rho \left| {{\psi ^ \bot }} \right\rangle \ .
\end{equation}	
On the other hand, when restricted to the subspace $V$, the operator $\rho -
	\alpha  | \psi \rangle \langle \psi |$ is negative definite, since it has
	its all eigenvalues negative. Explicitly, $\langle {\psi ^ \bot }|\left( {\rho  - \alpha |\psi \rangle \langle \psi |} \right)|{\psi ^ \bot }\rangle  = {\left| \beta  \right|^2}{E_d} + {\left| \theta  \right|^2}{E_{d - 1}} < 0$, which is equivalent to
	\begin{equation}
	\langle {\psi ^ \bot } | \rho | {\psi ^ \bot } \rangle < 0 \ .
	\end{equation}
This conclusion contradicts the fact that $\rho$ is a density matrix, which must be positive semidefinite. There exists at most one negative eigenvalue. Finally,	 if $| \psi \rangle$ happens to be an eigenvector, then $\rho | \psi \rangle
	= \left( E + \alpha \right) | \psi \rangle$. Therefore $E \geqslant -
	\alpha$, which means $E$ can be negative, zero, or positive.
\end{proof}

\begin{lemma}[\bf Problem of eigenvalue minimization]\label{lemma:problem_eigenvalue}
Following the previous lemma, if ${E_d} \le 0$, then the minimum error $P_{\rm err}$ is minimized by minimizing the negative eigenvalue $E_d$ of the matrix ${\rho _E} - \alpha \left| \psi  \right\rangle \left\langle \psi  \right|$ with fixed $\rho_E$ and
$\alpha$.
\end{lemma}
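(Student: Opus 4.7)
The plan is to reduce the trace-norm maximization that defines $P_{\rm err}$ to a one-parameter optimization over the single (potentially) negative eigenvalue of $M \equiv \rho_E - \alpha\left|\psi\right\rangle\left\langle\psi\right|$. By the previous lemma, the spectrum of $M$ splits cleanly into a non-negative block $E_1 \geqslant \dots \geqslant E_{d-1} \geqslant 0$ and a single eigenvalue $E_d$ that is allowed to be non-positive, which is precisely the hypothesis of the present lemma.

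First, I would invoke the trace identity
\begin{equation}
\operatorname{tr} M = \operatorname{tr}\rho_E - \alpha = 1 - \alpha ,
\end{equation}
which is independent of $\left|\psi\right\rangle$ and pins down the sum $\sum_{i=1}^{d-1} E_i = 1 - \alpha - E_d$. Second, since the trace norm is the sum of absolute values of eigenvalues and only $E_d$ may be negative, I would write
\begin{equation}
\left\| M \right\| = \sum_{i=1}^{d-1} E_i + |E_d| = (1 - \alpha - E_d) + (-E_d) = 1 - \alpha - 2 E_d ,
\end{equation}
where the hypothesis $E_d \leqslant 0$ was used to replace $|E_d|$ by $-E_d$. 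Thus $\|M\|$ depends on $\left|\psi\right\rangle$ only through $E_d$, and does so as a strictly decreasing affine function.

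Finally, I would recall from the preamble to the lemma that in region III the relation $\Omega_c(\left|\psi\right\rangle\left\langle\psi\right|) \propto \rho_E - \alpha\left|\psi\right\rangle\left\langle\psi\right|$ holds with a positive proportionality constant (obtained by dividing out $-\gamma > 0$), so $\|\Omega_c\|$ and $\|M\|$ are monotonically linked. Combining this with the definition $P_{\rm err} = \tfrac{1}{2}(1 - \|\Omega_c\|)$, minimizing $P_{\rm err}$ over $\left|\psi\right\rangle$ is equivalent to maximizing $\|M\|$, which by the displayed formula is equivalent to minimizing $E_d(\left|\psi\right\rangle)$ at fixed $\rho_E$ and $\alpha$.

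I do not anticipate a serious obstacle: the statement is essentially a bookkeeping consequence of the previous lemma together with the trace constraint. The only point requiring care is checking that the proportionality constant between $\Omega_c$ and $M$ is indeed positive throughout region III (so the two norms are monotonically, not anti-monotonically, related), and verifying that the degenerate case in which $\left|\psi\right\rangle$ is itself an eigenvector of $\rho_E$ with $E_d = -\alpha + \lambda$ is consistent with the formula rather than a boundary pathology.
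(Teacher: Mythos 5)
Your proposal is correct and follows essentially the same route as the paper's proof: both use the trace identity $\operatorname{tr}(\rho_E - \alpha\left|\psi\right\rangle\left\langle\psi\right|) = 1-\alpha$ together with the single-negative-eigenvalue structure from the previous lemma to write the trace norm as $1-\alpha-2E_d$, whence $P_{\rm err}=\tfrac{1}{2}\left[1-\left|\gamma\right|\left(1-\alpha-2E_d\right)\right]$ is an increasing affine function of $E_d$. Your extra remarks on the sign of the proportionality constant and the eigenvector degenerate case are sensible checks but add nothing beyond what the paper already does implicitly.
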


\begin{proof}
Let us now express the minimal error as 
\begin{equation}
{P_{\rm err}} = \left( {1 - |\gamma | \cdot ||{\rho _E} - \alpha \left| \psi  \right\rangle \left\langle \psi  \right|||} \right)/2 \ ,
\end{equation}
where $\alpha  = {p_1}\eta /|\gamma |$. Denote $E_k$'s as the eigenvalues of the matrix ${\rho _E} - \alpha \left| \psi  \right\rangle \left\langle \psi  \right|$. Then, we have 
\begin{equation}
{P_{err}} = (1 - \left| \gamma  \right| \sum\nolimits_k {\left| {{E_k}} \right|} )/2   = [1 - \left| \gamma  \right| (\sum\nolimits_{k \ne d} {{E_k}}  - {E_d}) ]/2  \ ,
\end{equation}
where we have applied the result of the previous lemma. Now, we have $\sum\nolimits_{k \ne d} {{E_k}}  = 1 - \alpha  - {E_d}$ and hence 
\begin{equation}
{P_{\rm err}} = \frac{1}{2} [1 - \left| \gamma  \right|(1 - \alpha  - 2{E_d})]  \ ,
\end{equation}
which depends linearly with the smallest eigenvalue $E_d$; the more negative $E_d$ is, the smaller $P_{\rm err}$ becomes.
\end{proof}

\begin{lemma}[\bf Eigenvector for minimization]
The smallest eigenvalue of the matrix ${\rho _E} - \alpha \left| \psi  \right\rangle \left\langle \psi  \right|$, with fixed $\rho_E$ and $\alpha$, can be minimized by choosing the pure state as the eigenvector associated with the smallest eigenvalue $\lambda_{\rm min}$ of $\rho_E$, i.e., $\left| \psi  \right\rangle  = \left| {{\theta_d}} \right\rangle $ where $\left( {{\rho _E} - {\lambda _{\min }  }} \right)\left| {{\theta _d}} \right\rangle  = 0$.
\end{lemma}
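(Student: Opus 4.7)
The natural tool here is the Rayleigh--Ritz variational characterization of the minimum eigenvalue. First I would write
\begin{equation}
E_{\min}(\ket{\psi}) \ = \ \min_{\|\phi\|=1} \Big[ \bra{\phi}\rho_E\ket{\phi} - \alpha |\langle \phi | \psi \rangle|^2 \Big] \ ,
\end{equation}
and establish a universal lower bound that is independent of $\ket{\psi}$. The plan is to bound each of the two terms separately: since $\lambda_d$ is the smallest eigenvalue of $\rho_E$, we have $\bra{\phi}\rho_E\ket{\phi} \geq \lambda_d$ for every unit $\ket{\phi}$; and since $\ket{\phi}$, $\ket{\psi}$ are unit vectors, $|\langle \phi | \psi \rangle|^2 \leq 1$, so that $-\alpha |\langle \phi | \psi \rangle|^2 \geq -\alpha$. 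Combining these gives
\begin{equation}
E_{\min}(\ket{\psi}) \ \geq \ \lambda_d - \alpha \ ,
\end{equation}
a bound that holds uniformly over all choices of $\ket{\psi}$.

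Next I would verify that this bound is saturated by the proposed minimizer $\ket{\psi} = \ket{\theta_d}$. In that case the operator $\rho_E - \alpha \ket{\theta_d}\bra{\theta_d}$ is diagonal in the eigenbasis $\{\ket{\theta_i}\}$ of $\rho_E$ with diagonal entries $\lambda_1, \lambda_2, \ldots, \lambda_{d-1}, \lambda_d - \alpha$. Since $\lambda_d \leq \lambda_i$ for all $i$ and $\alpha > 0$, the entry $\lambda_d - \alpha$ is the smallest, so $E_{\min}(\ket{\theta_d}) = \lambda_d - \alpha$, matching the lower bound. Together with the previous step, this proves that $\ket{\psi} = \ket{\theta_d}$ is optimal.

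The only subtle point is making sure equality is attainable \emph{simultaneously} in the two pointwise inequalities used to derive the bound. Equality in $\bra{\phi}\rho_E\ket{\phi} = \lambda_d$ requires $\ket{\phi}$ to lie in the $\lambda_d$--eigenspace of $\rho_E$, while equality in $|\langle \phi | \psi \rangle|^2 = 1$ requires $\ket{\phi} = \ket{\psi}$ (up to a phase). Both conditions are met precisely when $\ket{\psi} \in \text{span}\{\ket{\theta_d}\}$, and the variational minimum is then achieved at $\ket{\phi} = \ket{\psi} = \ket{\theta_d}$. If $\lambda_d$ is degenerate, the optimizer is not unique --- any unit vector in the $\lambda_d$--eigenspace works --- but the optimal value $\lambda_d - \alpha$ is unchanged. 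I do not anticipate a significant obstacle: the main work is just the Rayleigh--Ritz bound and the explicit computation on the diagonal example.
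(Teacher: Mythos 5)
Your proof is correct and follows essentially the same route as the paper: establish the uniform lower bound $E_{\min} \geq \lambda_{\min} - \alpha$ and then verify saturation at $\ket{\psi} = \ket{\theta_d}$. The only difference is cosmetic --- the paper quotes Weyl's inequality $\lambda_{\min}(A) + \lambda_{\min}(B) \leq \lambda_{\min}(A+B)$ for the lower bound, while you rederive the same bound directly from the Rayleigh--Ritz characterization.
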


\begin{proof}
First, the minimum eigenvalue ${\lambda _{\min }}\left( {A + B} \right)$ of the sum of two Hermitian matrices $A$ and $B$ is bounded by the sum of the minimum eigenvalues of the individual matrix, i.e., 
\begin{equation}
{\lambda _{\min }}\left( A \right) + {\lambda _{\min }}\left( B \right) \leqslant {\lambda _{\min }}\left( {A + B} \right) \ .
\end{equation}
As a result, the minimum eigenvalue $E_d$ of the matrix ${\rho _E} - \alpha \left| \psi  \right\rangle \left\langle \psi  \right|$ is bounded by ${\lambda _{\min }} - \alpha  \leqslant {E_d}$. Furthermore, this bound can be saturated by choosing  $\left| \psi  \right\rangle  = \left| {{\theta_d}} \right\rangle $.
\end{proof}

Therefore, for region III, we will only need to consider the minimum error resulted from sending the eigenstates of $\rho_E$ with the minimum eigenvalue. 

\begin{theorem*}[\bf Region III: minimal error decreases with reflectivity $\eta$]\label{RegionIII_class_Perr}
The minimal error $P_{\rm err}$	over all possible conventional input states is given by
\begin{equation}
P_{{\rm{err}}}^c = {p_0} + \gamma \left( {1 - {\lambda _d}} \right) \ ,
\end{equation}
which is obtained by choosing $\left| \psi  \right\rangle  = \left| {{\theta_d}} \right\rangle $ to be the eigenvector of $\rho_E$ associated with the smallest eigenvalue.
\end{theorem*}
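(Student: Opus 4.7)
The plan is to assemble the four lemmas already proved in this section into a short pipeline, and then collapse the algebra at the end. Since we are in Region III, the first step is to certify that $\gamma<0$: Region~III is the complement of Regions~I and~II, so either $p_0\le1/2$ with $\eta>\eta_*$ (giving $\gamma=p_1(1-\eta)-p_0<0$ directly), or $p_0>1/2$ with $\eta>\eta_c$, in which case $\gamma\le-p_0\eta<0$ as in the Region~II proof. With $\gamma<0$ established, I would write $\Omega_c(|\psi\rangle\langle\psi|)=-|\gamma|\bigl(\rho_E-\alpha|\psi\rangle\langle\psi|\bigr)$ with $\alpha\equiv p_1\eta/|\gamma|>0$, so that $\|\Omega_c\|=|\gamma|\cdot\|\rho_E-\alpha|\psi\rangle\langle\psi|\|$.

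Next I would invoke the two ``Positivity of eigenvalues'' lemmas, which guarantee that $\rho_E-\alpha|\psi\rangle\langle\psi|$ has at most one negative eigenvalue $E_d$. Combined with Lemma~\ref{lemma:problem_eigenvalue}, this yields the explicit expression
\begin{equation}
P_{\rm err}^c=\tfrac12\bigl[\,1-|\gamma|(1-\alpha-2E_d)\,\bigr],
\end{equation}
so that the optimization over input states reduces to making $E_d$ as negative as possible. At this point the last lemma (``Eigenvector for minimization'') gives the Weyl-type bound $E_d\ge\lambda_d-\alpha$ with equality at $|\psi\rangle=|\theta_d\rangle$.

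Before plugging in, I would verify that this optimal $E_d=\lambda_d-\alpha$ is indeed $\le0$, so that the lemmas producing the trace-norm formula genuinely apply. The condition $\alpha\ge\lambda_d$ is equivalent, after rearranging, to $\eta\ge(p_0/p_1-1)\lambda_d/(1-\lambda_d)=\eta_c$; this is automatic in the $p_0>1/2$ sub-region of~III and trivially holds when $p_0\le1/2$ (since then $\eta_c\le0$). Thus in all of Region~III the optimum is attained at $|\psi\rangle=|\theta_d\rangle$.

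Substituting $E_d=\lambda_d-\alpha$ and $|\gamma|\alpha=p_1\eta$ into the boxed expression for $P_{\rm err}^c$ and using $|\gamma|=p_0-p_1(1-\eta)$ is pure bookkeeping; the $p_1\eta$ terms cancel and what remains regroups into
\begin{equation}
P_{\rm err}^c=p_0\lambda_d+p_1(1-\eta)(1-\lambda_d)=p_0+\gamma(1-\lambda_d),
\end{equation}
which is the claimed formula. The only genuinely non-routine step is the sign check of $E_d$ delineating Region~III from Region~II; everything else is a direct application of the lemmas followed by algebra.
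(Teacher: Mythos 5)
Your proposal is correct and follows essentially the same route as the paper: the section's lemmas reduce the optimization to minimizing the single non-positive eigenvalue $E_d$ of $\rho_E-\alpha\ket{\psi}\bra{\psi}$, the Weyl-type bound $E_d\ge\lambda_d-\alpha$ is saturated at $\ket{\psi}=\ket{\theta_d}$, and the algebra collapses to $p_0+\gamma(1-\lambda_d)$. The one place you go beyond the paper's write-up is in explicitly verifying that $\alpha\ge\lambda_d$ (equivalently $p_1\eta+\gamma\lambda_d\ge 0$) throughout Region III, a condition the paper simply assumes as the defining range; that check is a welcome addition but does not change the argument.
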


\begin{proof}
Let us consider the matrix, 
\begin{equation}
{\Omega _{\text{c}}}(\left| {{\theta _d}} \right\rangle \langle {\theta _d}|) = {p_1}\eta \left| {{\theta _d}} \right\rangle \langle {\theta _d}| + \gamma {\rho _E} \ .
\end{equation}
Here we consider the range where 
\begin{equation}
\gamma \equiv {p_1}(1 - \eta ) - {p_0}  <0
\end{equation}
but ${p_1}\eta  + \gamma {\lambda _d} > 0$. Note that ${\text{tr(}}{\Omega _{\text{c}}}) = {p_1}\eta  + \gamma $ is a sum of the $d-1$ negative eigenvalues and one positive eigenvalue, ${p_1}\eta  + \gamma {\lambda _d}$. Therefore, the trace norm is obtained by $\left\| {{\Omega _c}} \right\| =  - {\text{tr(}}{\Omega _c}{\text{) + 2}}\left( {{p_1}\eta  + \gamma {\lambda _d}} \right) $, or
\begin{equation}
  \left\| {{\Omega _c}} \right\| = {p_1}\eta  - \gamma  + 2{\lambda _d} \gamma \ .
\end{equation}
Finally, as $P_{{\text{err}}}^c = \left( {1 - ||{\Omega _c}||} \right)/2$, we have $P_{{\text{err}}}^c = {p_0} + \gamma \left( {1 - {\lambda _d}} \right)$. 
\end{proof}

To check the consistency of the result above, when $\lambda_d = 0$, we have 
\begin{equation}
P_{{\text{err}}}^c = {p_0} + \gamma  = {p_1}\left( {1 - \eta } \right) \ .
\end{equation}
Note that when $\gamma  = 0$, then $P_{{\text{err}}}^c = {p_0}$. Moreover, when 
\begin{equation}
\eta  = (\tfrac{{{p_0}}}{{{p_1}}} - 1)(\tfrac{{{\lambda _d}}}{{1 - {\lambda _d}}}) \ ,
\end{equation}
 we have ${p_1}\eta  =  - \gamma {\lambda _d}$, which implies that $P_{{\text{err}}}^c = {p_0} + \gamma  - \gamma {\lambda _d} = {p_0} + {p_1}\eta  + ({p_1}\left( {1 - \eta } \right) - {p_0}) = {p_1}$. In the special case where $p_0=p_1=1/2$ and $d=2$, $\gamma = - \eta /2$. Therefore,
\begin{equation}
P_{{\text{err}}}^c = 1/2 - \eta /4 \ ,
\end{equation}
in agreement with the example.
%%%%%%%%%%%%%%%%%%%

\section*{Results for quantum illumination}

%%%%%%%%%%%%%%%%%%%%%%%%
\begin{lemma}[\bf Lower bound of eigenvalue value]
Given a density matrix $\rho_E$ and a pure state $\ket{\psi}$, the	minimum eigenvalue (associated with the eigenvector $\left| {{g_\psi }} \right\rangle $), ${E_g} \equiv {\lambda _{\min }}\left( {{\rho _E} \otimes {\rho _B} - \alpha \left| \psi  \right\rangle \left\langle \psi  \right|} \right)$, is bounded below by 
\begin{equation}
{E_g} \geqslant \sum\limits_{i = 1}^d {{\lambda _i}x_i^2}  - \alpha {( {\sum\limits_{i = 1}^d {{x_i}} } )^2} \ ,
\end{equation}
where ${x_i} \equiv \left| {\left\langle {{u_i}} \right.\left| {{v_i}} \right\rangle } \right|$.
\end{lemma}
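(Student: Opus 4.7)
The plan is to use the variational characterization of the minimum eigenvalue: since $\ket{g_\psi}$ is a normalized ground state of $H_q = \rho_E \otimes \rho_B - \alpha \ket{\psi}\bra{\psi}$, one has
\[
E_g \;=\; \langle g_\psi | \rho_E \otimes \rho_B | g_\psi \rangle \;-\; \alpha \left|\langle \psi | g_\psi \rangle\right|^{2}.
\]
My strategy is then to lower-bound the (manifestly non-negative) first term and upper-bound the overlap appearing in the second term, both expressed in terms of the non-negative scalars $x_i = |\langle u_i | v_i \rangle|$.

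The first step is to compute the reduced state explicitly. Because $\{\ket{\theta_i}\}$ is orthonormal, tracing out the $A$ subsystem of $\ket{\psi} = \sum_i \ket{\theta_i}\ket{u_i}$ gives $\rho_B = \sum_j \ket{u_j}\bra{u_j}$. Inserting $\ket{g_\psi} = \sum_i \ket{\theta_i}\ket{v_i}$ and again using the orthonormality of $\{\ket{\theta_i}\}$, the first term becomes $\sum_{i,j}\lambda_i |\langle u_j | v_i \rangle|^{2}$. Since every summand is non-negative, retaining only the $j=i$ diagonal yields the lower bound $\sum_i \lambda_i x_i^{2}$. For the overlap, I would expand $\langle \psi | g_\psi \rangle = \sum_i \langle u_i | v_i \rangle$ and apply the triangle inequality to obtain $|\langle \psi | g_\psi \rangle| \leq \sum_i x_i$; since $\alpha > 0$, this translates into $-\alpha|\langle \psi | g_\psi \rangle|^{2} \geq -\alpha (\sum_i x_i)^{2}$. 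Adding the two inequalities gives the claimed bound.

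The only real subtlety is a bookkeeping one. Because the $\ket{u_i}$ and $\ket{v_i}$ are neither normalized nor mutually orthogonal, the $x_i$ are not individually bounded by $1$; the relevant global constraints are $\sum_i \langle u_i | u_i \rangle = \sum_i \langle v_i | v_i \rangle = 1$. I expect these constraints to re-enter at the next stage, where one would minimize $\sum_i \lambda_i x_i^{2} - \alpha (\sum_i x_i)^{2}$ over admissible $x_i$ to extract the $\psi$-independent estimate $\lambda_h - \alpha$. At the present step, however, the reasoning is purely algebraic: both inequalities are simultaneously saturated precisely when each $\ket{v_i}$ is proportional to $\ket{u_i}$ with a common phase, which is consistent with the paper's earlier remark that the optimum is attained at $\ket{g_\psi} = \ket{\psi}$.
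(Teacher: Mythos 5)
Your proposal is correct and follows essentially the same route as the paper's own proof: both start from the variational identity $E_g=\langle g_\psi|\rho_E\otimes\rho_B|g_\psi\rangle-\alpha|\langle\psi|g_\psi\rangle|^2$, lower-bound the first term by writing $\rho_B=\sum_j|u_j\rangle\langle u_j|$ and keeping only the diagonal terms $j=i$, and upper-bound the overlap via the triangle inequality $|\langle\psi|g_\psi\rangle|\leqslant\sum_i x_i$. The only difference is cosmetic (you make the form of $\rho_B$ explicit where the paper leaves it implicit), so there is nothing further to add.
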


\begin{proof}
Let us consider the explicit form of the smallest eigenvalue	:
\begin{equation}
  {E_g} = \left\langle g_\psi \right|{\rho _E} \otimes {\rho _B}\left| g_\psi \right\rangle  - \alpha \langle g_\psi \left| \psi  \right\rangle \left\langle \psi  \right| g_\psi \rangle \ ,
\end{equation}
where $\left\langle \psi  \right.\left| {{g_\psi }} \right\rangle  = \sum\nolimits_{i = 1}^d {\left\langle {{u_i}} \right.\left| {{v_i}} \right\rangle }$
 and $\left\langle {{g_\psi }} \right|{\rho _E} \otimes {\rho _B}\left| {{g_\psi }} \right\rangle  = \sum\nolimits_{i = 1}^d {{\lambda _i}} \left\langle {{v_i}} \right|{\rho _B}\left| {{v_i}} \right\rangle $. Now, we can obtain an upper bound of $\left\langle \psi  \right.\left| {{g_\psi }} \right\rangle $ by (i) taking absolute values for each term, i.e., 
\begin{equation}
\left\langle \psi  \right.\left| {{g_\psi }} \right\rangle  \leqslant \sum\limits_{i = 1}^d {\left| {\left\langle {{u_i}} \right.\left| {{v_i}} \right\rangle } \right|} 
 \end{equation}
(which can be achieved by choosing the vectors, $\ket{u_i}$'s and $\ket{v_i}$'s, to be proportional to each other.), and (ii) dropping all the cross terms in an expansion of $\left\langle {{v_i}} \right|{\rho _B}\left| {{v_i}} \right\rangle $, i.e., 
\begin{equation}
\left\langle {{v_i}} \right|{\rho _B}\left| {{v_i}} \right\rangle  = \sum\limits_{j = 1}^d {{{\left| {\left\langle {{u_j}} \right.\left| {{v_i}} \right\rangle } \right|}^2}}  \geqslant {\left| {\left\langle {{u_i}} \right.\left| {{v_i}} \right\rangle } \right|^2} \ .
\end{equation}
By defining ${x_i} \equiv \left| {\left\langle {{u_i}} \right.\left| {{v_i}} \right\rangle } \right|$, we obtain a lower bound for the smallest eigenvalue $E_g$, $  {E_g} \geqslant \sum\nolimits_{i = 1}^d {{\lambda _i}x_i^2}  - \alpha {( {\sum\nolimits_{i = 1}^d {{x_i}} } )^2}$.
%\begin{equation}
%  {E_g} \geqslant \sum\limits_{i = 1}^d {{\lambda _i}x_i^2}  - \alpha {( {\sum\limits_{i = 1}^d {{x_i}} } )^2} \ .
%\end{equation}
\end{proof}

Next, we are going to minimize the lower bound of $E_g$, subject to a constraint, 
\begin{equation}
\sum\limits_{i = 1}^d {{x_i}}  \equiv C \ ,
\end{equation}
where $C \leqslant 1$ is not greater than unity as $\left| {\left\langle \psi  \right.\left| {{g_\psi }} \right\rangle } \right| \leqslant 1$. We found that $E_g$ is minimized when $\alpha  >  \lambda_h$ where we have $C=1$. Here ${\lambda _h} \equiv {( {\sum\nolimits_{i = 1}^d {\lambda _i^{ - 1}} } )^{ - 1}}$ is proportional to the harmonic mean of the set of eigenvalues $\{ \lambda_i \}$ and ${\lambda _h} \leqslant {\lambda _{\min }}$. On the other hand, when $\alpha  \leqslant {\lambda_H}$, the smallest eigenvalue of $H_Q$ is positive, i.e., ${\lambda _{\min }} \geqslant 0$. In this case, the trace norm is equal to the trace, i.e., $\left\| {{H_q}} \right\| = {\text{tr}}{H_q}$. 

%%%%%%%%%%%%%%%%%%%%%%%%%
\subsection{Region II of quantum illumination}
 
\begin{lemma}[\bf Minimization with Lagrange multiplier]
The minimum eigenvalue ${E_g} = {\lambda _{\min }}\left( {{\rho _E} \otimes {\rho _B} - \alpha \left| \psi  \right\rangle \left\langle \psi  \right|} \right)$ is bounded below by a value depending on $\alpha$. (i) When $\alpha  \leqslant {\lambda _h} \equiv {(\sum\nolimits_{i = 1}^d {\lambda _i^{ - 1}} )^{ - 1}}$, $E_g \geqslant 0$. (ii) When $\alpha  > {\lambda _h}$, ${E_g} \geqslant {\lambda _h} - \alpha $.\end{lemma}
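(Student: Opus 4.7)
The plan is to solve the constrained minimization problem stated immediately before the lemma: minimize $f(x_1,\ldots,x_d) := \sum_{i=1}^d \lambda_i x_i^2 - \alpha\bigl(\sum_{i=1}^d x_i\bigr)^2$ over $x_i \geq 0$, noting that the constraint $\sum_i x_i = C$ has $C \in [0,1]$ because $|\langle \psi | g_\psi \rangle| \leq 1$ by the Cauchy--Schwarz inequality applied to the two unit vectors. The strategy is to first freeze $C$ and minimize the quadratic $\sum_i \lambda_i x_i^2$ with $\sum_i x_i = C$ fixed, then optimize the resulting one-variable expression over $C \in [0,1]$.

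First I would apply a single Lagrange multiplier $\mu$ to the problem $\min \sum_i \lambda_i x_i^2$ subject to $\sum_i x_i = C$. The stationarity condition $2\lambda_i x_i = \mu$ gives $x_i = \mu/(2\lambda_i)$; substituting into the constraint yields $\mu = 2 C \lambda_h$, using the definition $\lambda_h^{-1} = \sum_i \lambda_i^{-1}$. Hence the optimizer is $x_i^* = C\lambda_h/\lambda_i$, which is automatically nonnegative (consistent with $x_i \geq 0$), and the constrained minimum of the quadratic evaluates cleanly to
\begin{equation}
\sum_{i=1}^d \lambda_i (x_i^*)^2 = C^2 \lambda_h^2 \sum_{i=1}^d \frac{1}{\lambda_i} = C^2 \lambda_h .
\end{equation}
Since $\sum_i \lambda_i x_i^2$ is strictly convex in $(x_1,\ldots,x_d)$, this critical point is indeed the unique global minimum on the affine slice. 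Consequently, for fixed $C$, the lower bound reduces to $C^2(\lambda_h - \alpha)$.

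Next I would optimize over $C \in [0,1]$. If $\alpha \leq \lambda_h$, the factor $\lambda_h - \alpha$ is nonnegative, so $C^2(\lambda_h - \alpha) \geq 0$ for all feasible $C$, establishing $E_g \geq 0$. If instead $\alpha > \lambda_h$, the factor is negative, so the infimum is attained at the largest admissible $C$, namely $C = 1$, giving $E_g \geq \lambda_h - \alpha$. Combining the two regimes yields precisely the bounds (i) and (ii) in the lemma statement.

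The main obstacle is mostly bookkeeping rather than technique: one must justify that the Lagrange-multiplier critical point is a true (not merely local) minimum — which follows from strict convexity of the quadratic — and that relaxing $\sum_i x_i$ from equality to an inequality $\leq C$ is harmless because the resulting expression is monotone in $C^2$ with the correct sign in each regime. A small subtlety to note is that the hypothesis $x_i = |\langle u_i | v_i \rangle| \geq 0$ is essential for the Lagrange analysis to remain inside the feasible cone, and should be flagged, but it is immediate from the optimizer being strictly positive.
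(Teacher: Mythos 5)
Your proposal is correct and follows essentially the same route as the paper: a Lagrange-multiplier analysis of the constrained quadratic yielding the optimizer $x_i^* = C\lambda_h/\lambda_i$ and the value $C^2(\lambda_h-\alpha)$, followed by optimizing over $C\in[0,1]$ in the two regimes of $\alpha$. Your observation that $-\alpha(\sum_i x_i)^2$ is constant on the constraint slice, and your explicit appeal to strict convexity to certify the critical point as a global minimum, are minor refinements of the paper's argument rather than a different approach.
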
 

\begin{proof}
Let us introduce a Lagrange multiplier $l_m$. The the lower bound of $E_g$, labeled by 
\begin{equation}
f \equiv\sum\limits_{i = 1}^d {{\lambda _i}x_i^2}  - \alpha {( {\sum\limits_{i = 1}^d {{x_i}} } )^2} \ ,
\end{equation}
is minimized when the condition, 
\begin{equation}
\partial f/\partial {x_i} = {l_m}\partial g/\partial {x_i} \ ,
\end{equation}
holds, where $g \equiv \sum\nolimits_{i = 1}^d {{x_i}}  - C$. Explicitly, we have
\begin{equation}
  \frac{{\partial f}}{{\partial {x_i}}} = 2({\lambda _i}{x_i} - \alpha C), \quad \frac{{\partial g}}{{\partial {x_i}}} = 1 \ ,
\end{equation}
which gives ${x_i} = \left( {\alpha C + {l_m}/2} \right)/{\lambda _i}$, and hence $C = \sum\nolimits_{i = 1}^d {{x_i}}  = \left( {\alpha C + {l_m}/2} \right)(\sum\nolimits_{i = 1}^d {\lambda _i^{ - 1}} )$. As a result, the minimal value of $f$ is given by,
\begin{equation}
  {f}  \geqslant  {C^2}({(\sum\limits_{i = 1}^d {\lambda _i^{ - 1}} )^{ - 1}} - \alpha ) \ .
\end{equation}
Therefore, When $\alpha  \leqslant {\lambda _h} \equiv {(\sum\nolimits_{i = 1}^d {\lambda _i^{ - 1}} )^{ - 1}}$, $f$ is minimized by choosing $C=0$, which implies all of the eigenvalues of $H_q$ are positive, i.e., $\lambda_i(H_q) \geqslant 0$. However, when $\alpha  > {\lambda _h}$, the lower bound $f$ is minimized by setting  $C^2=1$, which implies ${E_g} \geqslant {\lambda _h} - \alpha $.
\end{proof} 

Note that the condition of $\alpha  \leqslant {\lambda _h}$ is equivalent to 
\begin{equation}
\eta  \leqslant \tfrac{{{\lambda _h}}}{{1 - {\lambda _h}}} ( {\tfrac{{{p_0}}}{{{p_1}}} - 1} ) \ .
\end{equation}
This defines the region II of quantum illumination. There, all the eigenvalues of $\Omega_q$ has the same sign. Similar to the classical case (see theorem~\ref{theo_con_p0g12}), the minimal error is given by 
\begin{equation}
  P_{\rm err}^q = p_1 \ ,
\end{equation}
which can be achieved by any state. Recall that the region II for conventional illumination is bounded by
\begin{equation}
\eta  \leqslant ( {\tfrac{{{p_0}}}{{{p_1}}} - 1} )( {\tfrac{{{\lambda _{\min }}}}{{1 - {\lambda _{\min }}}}} ) \ ,
\end{equation}
and is equivalent to the trivial strategy. Quantum illumination is capable of shrinking the boundary to 
\begin{equation}
\eta  \leqslant \tfrac{{{\lambda _h}}}{{1 - {\lambda _h}}} ( {\tfrac{{{p_0}}}{{{p_1}}} - 1} ) \ ,
\end{equation}
as $\lambda_h \leq \lambda_{\rm min}$.
%%%%%%%%%%%%%%%%%%%%%%%%

\subsection{Region III of quantum illumination}

%\begin{lemma}[\bf Optimal state for quantum illumination]
{\bf Result: Optimal state for quantum illumination}
The lower bound, $\lambda_h - \alpha$, of $E_g$ can be achieved 	by the input state, 
\begin{equation}
\left| \psi  \right\rangle  = \sum\limits_{i = 1}^d {{\mu _i}} \left| {{\theta _i}} \right\rangle \left| {{\theta _i}} \right\rangle \ ,
\end{equation}
where ${\mu _i} = \sqrt {{\lambda _h}/{\lambda _i}} $.
%\end{lemma}

\begin{proof}
Let us consider the following ansatz, $\left| \psi  \right\rangle  = \sum\nolimits_{i = 1}^d {{\mu _i}} \left| {{\theta _i}} \right\rangle \left| {{\theta _i}} \right\rangle$,
where the amplitudes $\mu_i$'s are assumed to be non-negative ($\mu_i \geqslant 0$) and normalized ($\sum\nolimits_{i = 1}^d {\mu _i^2}  = 1$). The expectation value, $  \left\langle {{H_q}} \right\rangle  = \left\langle \psi  \right|{H_q}\left| \psi  \right\rangle = \left\langle \psi  \right|{\rho _E} \otimes {\rho _B}\left| \psi  \right\rangle - \alpha$, is given by 
\begin{equation}
  \left\langle {{H_q}} \right\rangle  = \sum\limits_{i = 1}^d {{\lambda _i}\mu _i^2}  - \alpha \ .
\end{equation}
We can achieve the lower bound, $\left\langle {{H_q}} \right\rangle  = {\lambda _h} - \alpha $,  by setting $\mu _i^2 = {\lambda _h}/{\lambda _i}$. (One can readily check that the normalization condition is obeyed as $\sum\nolimits_{i = 1}^d {\mu _i^2}  = {\lambda _h}\sum\nolimits_{i = 1}^d {1/{\lambda _i}}  = 1$.)
\end{proof}

%%%%%%%%%%%%%%%%%%%%%%%%%%%%%

\section*{Complimentary results}

%\begin{theorem*}[\bf Upper bound of minimal error]
{\bf Result: Upper bound of minimal error }
The error probability $P_{\rm err}$ is bounded above by either $p_0$ or $p_1$, i.e.,
\begin{equation}
  {P_{{\rm{err}}}} \leqslant \frac{1}{2}\left( {1 - \left| {{p_0} - {p_1}} \right|} \right) = \min \left\{ {{p_0},{p_1}} \right\} \ .
\end{equation}
%\end{theorem*}

\begin{proof}
Mathematically, this result comes from the definition of the trace norm. Let us consider the case of $p_0 \ge p_1$ first. Denote ${e_k}$'s as the eigenvalues of the operator ${p_0}{\rho _0} - {p_1}{\rho _1}$. Furthermore, we label those non-negative eigenvalues as $e_k^ + \ge 0 $ and the negative ones as $e_k^- < 0$. In this way, we can write the trace norm as the difference between these two set of eigenvalues, i.e., 
\begin{equation}
\left\| {{p_0}{\rho _0} - {p_1}{\rho _1}} \right\| = \sum\nolimits_k {e_k^ + }  - \sum\nolimits_k {e_k^ - } .
\end{equation}
In fact, using ${\rm tr}({p_0}{\rho _0} - {p_1}\rho_1 ) = {p_0} - {p_1} = \sum\nolimits_k {e_k^ + }  + \sum\nolimits_k {e_k^ - } $, we can further write $\left\| {{p_0}{\rho _0} - {p_1}{\rho _1}} \right\| = {p_0} - {p_1} - 2\sum\nolimits_k {e_k^ - }  \geqslant {p_0} - {p_1}$, which yields
\begin{equation}
{P_{\rm err}} \leqslant \tfrac{1}{2}\left( {1 - ({p_0} - {p_1})} \right) \leqslant {p_1} \ .
\end{equation}
The result for the other case, i.e., $p_0 < p_1$ can be obtained by the same argument.
\end{proof}

\begin{theorem*}[\bf Monotonicity of minimal error]
For a given reflectivity $\eta$, and density matrix $\rho$, we label the minimal error as follows, ${P_{err}}\left(\rm  \eta  \right) = \tfrac{1}{2}\left( {1 - \left\| {{p_1}{\rho _1}\left( \eta  \right) - {p_0}{\rho _0}} \right\|} \right)$, where ${\rho _1}\left( \eta  \right) = \eta \rho  + \left( {1 - \eta } \right){\rho _0}$. The minimal error is a non-increasing function of the reflectivity, i.e., if $\eta  \geqslant \eta '$, then
\begin{equation}
  {P_{\rm err}}\left( \eta  \right) \leqslant {P_{\rm err}}\left( {\eta '} \right) \ .
\end{equation}
\end{theorem*}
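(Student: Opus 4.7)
The plan is to exploit the convex structure of the map $\rho \mapsto \eta\rho + (1-\eta)\rho_0$ together with the triangle inequality for the trace norm, and to close the argument using the elementary inequality $|{\rm tr}(A)| \le \|A\|$ for Hermitian $A$. First, I would observe that for any $\eta' \le \eta$, the state $\rho_1(\eta')$ can be written as a convex combination of $\rho_1(\eta)$ and $\rho_0$. Setting $t \equiv \eta'/\eta \in [0,1]$, a direct computation gives $t\rho_1(\eta) + (1-t)\rho_0 = t\eta\rho + (1-t\eta)\rho_0 = \rho_1(\eta')$. Then, by the linearity of $\Omega(\eta) \equiv p_1\rho_1(\eta) - p_0\rho_0$, one obtains
\begin{equation}
\Omega(\eta') \;=\; t\,\Omega(\eta) \;+\; (1-t)(p_1 - p_0)\,\rho_0 \, .
\end{equation}

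Second, I would apply the triangle inequality of the trace norm, together with $\|\rho_0\| = 1$, to deduce
\begin{equation}
\|\Omega(\eta')\| \;\le\; t\,\|\Omega(\eta)\| \;+\; (1-t)\,|p_1 - p_0| \, .
\end{equation}
The final ingredient is the bound $|{\rm tr}(A)| \le \|A\|$ valid for any Hermitian $A$ (the trace is the sum of the eigenvalues, the trace norm is the sum of their absolute values). Since ${\rm tr}(\Omega(\eta)) = p_1 - p_0$ for every $\eta$, this gives $|p_1 - p_0| \le \|\Omega(\eta)\|$; substituting back yields $\|\Omega(\eta')\| \le \|\Omega(\eta)\|$, and hence $P_{\rm err}(\eta') \ge P_{\rm err}(\eta)$, as required.

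Honestly, there is no serious obstacle here: the only ``creative'' step is to recognise the convex decomposition $\rho_1(\eta') = t\rho_1(\eta) + (1-t)\rho_0$, which is the same observation as saying that the map $\Phi_t(X) \equiv tX + (1-t){\rm tr}(X)\rho_0$ is a CPTP channel (for $t\in[0,1]$) satisfying $\Phi_t(\Omega(\eta)) = \Omega(t\eta)$. From this channel viewpoint, the result also follows directly from the monotonicity of the trace norm under positive trace-preserving maps applied to Hermitian operators. I would prefer the direct triangle-inequality route in the write-up, because it stays entirely within the notation already introduced in the paper and avoids importing an external contraction theorem; the CPTP viewpoint is worth mentioning as a one-line sanity check that the mechanism at work is precisely data processing.
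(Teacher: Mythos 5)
Your proof is correct. It rests on the same structural observation as the paper's proof --- that $\rho_1(\eta')$ is obtained from $\rho_1(\eta)$ by a further application of the depolarizing-toward-$\rho_0$ channel --- but it closes the argument differently. The paper identifies the intermediate channel $\varepsilon_\Delta$ with $\Delta = \eta'/\eta$ and then \emph{cites} the contractivity of the trace norm under trace-preserving maps (or, alternatively, argues by contradiction from the Helstrom bound). You instead prove the needed contraction from scratch for this particular channel: the identity $\Omega(\eta') = t\,\Omega(\eta) + (1-t)(p_1-p_0)\rho_0$ with $t=\eta'/\eta$ checks out, the triangle inequality gives $\|\Omega(\eta')\| \le t\|\Omega(\eta)\| + (1-t)|p_1-p_0|$, and the elementary bound $|p_1-p_0| = |\mathrm{tr}\,\Omega(\eta)| \le \|\Omega(\eta)\|$ finishes the job. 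This buys self-containedness --- no external data-processing theorem is imported --- at the cost of being specific to this one-parameter family of channels, whereas the paper's route makes the general mechanism (data processing) explicit. The only loose end is the degenerate case $\eta = 0$, where $t=\eta'/\eta$ is undefined; there $\eta' \le \eta$ forces $\eta'=0$ and the claim is trivial, so it is worth a parenthetical remark but is not a gap.
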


\begin{proof}
First of all, we define a trace-preserving quantum operation that represents the action of the reflection, 
\begin{equation}
{\varepsilon _\eta }\left( \rho  \right) = \eta \rho  + \left( {1 - \eta } \right){\rho_0} \ ,
\end{equation}
and hence ${\varepsilon _\eta }\left( {{\rho_0}} \right) = {\rho_0}$. Of course, there exist a quantum operation, 
\begin{equation}
{\varepsilon _\Delta }\left( \rho  \right) = \Delta \rho  + \left( {1 - \Delta } \right){\rho_0} \ ,
\end{equation}
where $\Delta \eta  = \eta '$, connecting the two, i.e., 
\begin{equation}
{\varepsilon _\Delta }\left( {{\varepsilon _\eta }\left( \rho  \right)} \right) = {\varepsilon _{\eta '}}\left( \rho  \right) \ .
\end{equation}
In this way, all we need to show is that 
\begin{equation}
\left\| {{p_1}{\varepsilon _{\eta '}}\left( \rho  \right) - {p_0}{\varepsilon _{\eta '}}\left( {{\rho_0}} \right)} \right\| \leqslant \left\| {{p_1}{\varepsilon _\eta }\left( \rho  \right) - {p_0}{\varepsilon _\eta }\left( {{\rho_0}} \right)} \right\| \ ,
\end{equation}
or equivalently, 
\begin{equation}
\left\| {{p_1}{\varepsilon _\Delta }\left( \rho  \right) - {p_0}{\varepsilon _\Delta }\left( {{\rho_0}} \right)} \right\| \leqslant \left\| {{p_1}\rho  - {p_0}{\rho_0}} \right\| \ ,
\end{equation}
for any density matrix $\rho$. The proof for the latter inequality is essentially the same as the well-known result that a trace-preserving operation is contractive under the measure of trace distance. 

Alternatively, one can prove it by contradiction as follows: first, the Helstrom bound states that the minimal error for distinguishing between density matrices $\rho$ and $\rho_0$ (occurring with probabilities $p_0$ and $p_1$ respectively) is determined by the trace norm $\left\| {{p_1}\rho  - {p_0}{\rho_0}} \right\|$, over all possible POVM measurements.  Since any channel can be regarded as a type of POVM, if the opposite, i.e., 
\begin{equation}
\left\| {{p_1}{\varepsilon _\Delta }\left( \rho  \right) - {p_0}{\varepsilon _\Delta }\left( {{\rho_0}} \right)} \right\| \geqslant \left\| {{p_1}\rho  - {p_0}{\rho_0}} \right\| \ ,
\end{equation}
were true, then one could find a POVM that yields a smaller error than the Helstrom bound, which causes a contradiction.
\end{proof}

\begin{theorem*}[\bf Optimization over pure states]\label{theo:optim_purestate}
In optimizing of the trace norm of the matrices $\Omega_{c} (\rho)$ or $\Omega_{q} (\rho)$ over all possible density matrices, we only need to maximize the set of pure states. i.e.,
\begin{equation}\label{max_rho_2_max_pure}
\mathop {\max }\limits_{\rho  \in \mathcal{H}} \left\| {{ \, \Omega _{\text{c}}}  (\rho )} \, \right\| = \mathop {\max }\limits_{|\psi \rangle  \in \mathcal{H}} \left\| {{ \, \Omega _{\text{c}} }\, (\left| \psi  \right\rangle \left\langle \psi  \right|)} \, \right\|,
\end{equation}
\begin{equation}
\mathop {\max }\limits_{\rho_{AB}  \in \mathcal{H}\otimes\mathcal{H}} \left\| {{ \, \Omega _{\text{q}}}  (\rho_{AB} )} \, \right\| = \mathop {\max }\limits_{|\psi_{AB} \rangle  \in \mathcal{H}\otimes\mathcal{H}} \left\| {{ \, \Omega _{\text{q}} }\, (\left| \psi_{AB}  \right\rangle \left\langle \psi_{AB}  \right|)} \, \right\| .
\end{equation}
\end{theorem*}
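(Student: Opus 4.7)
The plan is to exploit the fact that both $\Omega_c$ and $\Omega_q$ respect convex combinations of their arguments, together with the triangle inequality for the trace norm. First I would verify the linearity property explicitly. From $\Omega_c(\rho) = p_1\eta\rho + \gamma\rho_E$, one checks directly that $\Omega_c(\sum_i p_i\rho_i) = \sum_i p_i\Omega_c(\rho_i)$ whenever $p_i\ge 0$ and $\sum_i p_i=1$, since the constant piece $\gamma\rho_E$ absorbs as $\sum_i p_i\gamma\rho_E=\gamma\rho_E$. For the quantum case, $\Omega_q(\rho_{AB}) = p_1\eta\rho_{AB} + \gamma\rho_E\otimes\rho_B$, and the reduced state $\rho_B = \mathrm{tr}_A\rho_{AB}$ itself depends linearly on $\rho_{AB}$, so the same identity carries over to $\Omega_q$.

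Given this linearity, the main argument is a short convexity estimate. For any density matrix $\rho$, take a spectral decomposition $\rho = \sum_i p_i|\psi_i\rangle\langle\psi_i|$ with $p_i\ge 0$ and $\sum_i p_i=1$. Then
\begin{equation}
\|\Omega_c(\rho)\| = \left\|\sum_i p_i\,\Omega_c(|\psi_i\rangle\langle\psi_i|)\right\| \le \sum_i p_i\,\|\Omega_c(|\psi_i\rangle\langle\psi_i|)\| \le \max_i \|\Omega_c(|\psi_i\rangle\langle\psi_i|)\|,
\end{equation}
and the rightmost expression is bounded above by $\sup_{|\psi\rangle}\|\Omega_c(|\psi\rangle\langle\psi|)\|$. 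Taking the supremum over $\rho$ on the left yields the nontrivial direction $\max_{\rho}\|\Omega_c(\rho)\|\le \max_{|\psi\rangle}\|\Omega_c(|\psi\rangle\langle\psi|)\|$, while the reverse inequality is immediate because pure states are density matrices, so one has equality (\ref{max_rho_2_max_pure}). The identical argument, applied with a spectral decomposition $\rho_{AB} = \sum_i p_i|\psi_i^{AB}\rangle\langle\psi_i^{AB}|$ and using linearity of $\Omega_q$, gives the quantum case.

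I do not anticipate any real obstacle: the statement is a textbook instance of the principle that a convex functional on a convex set attains its maximum at an extreme point, and the pure states are precisely the extreme points of the set of density matrices. The only mild care needed is in the quantum case, where the dependence of $\rho_E\otimes\rho_B$ on $\rho_{AB}$ through the partial trace must be invoked to establish linearity of $\Omega_q$ before the triangle inequality is applied.
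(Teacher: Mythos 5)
Your proposal is correct and follows essentially the same route as the paper's own proof: spectral decomposition of an arbitrary density matrix into pure states, linearity (affineness) of $\Omega_{c}$ and $\Omega_{q}$, and convexity of the trace norm, with the reverse inequality being trivial since pure states are density matrices. Your explicit check that $\rho_B=\mathrm{tr}_A\,\rho_{AB}$ depends linearly on $\rho_{AB}$ is a small but welcome addition to the paper's remark that ``the case for $\Omega_q$ is similar.''
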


\begin{proof}
Let us denote, 
\begin{equation}
{\rho _*} \equiv \sum\nolimits_i {{\lambda _i}|{\psi _i}\rangle \langle {\psi _i}|} \ ,
\end{equation}
expressed in some diagonal basis $\left\{ {\left| {{\psi _i}} \right\rangle } \right\}$ where all ${\lambda _i} \geqslant 0$ and $\sum\nolimits_i {{\lambda _i}}  = 1$, as the density matrix that maximizes the trace norm of ${\Omega _c}$, i.e., 
\begin{equation}
\left\| {{\Omega _{\rm c}}\left( {{\rho _*}} \right)} \right\| = \mathop {\max }\limits_{\rho  \in {\cal H}} \left\| {{\Omega _{\rm c}}\left( \rho  \right)} \right\| \ .
\end{equation}
Since ${\Omega _c}$ is linear, which implies that 
\begin{equation}
{\Omega _c}\left( {\sum\nolimits_i {{\lambda _i}|{\psi _i}\rangle \langle {\psi _i}|} } \right) = \sum\nolimits_i {{\lambda _i}{\Omega _c}\left( {|{\psi _i}\rangle \langle {\psi _i}|} \right)} \ ,
\end{equation}
and that the trace norm is convex, i.e., 
\begin{equation}
\left\| {\sum\nolimits_i {{\lambda _i}{\Omega _c}\left( {|{\psi _i}\rangle \langle {\psi _i}|} \right)} } \right\| \leqslant \sum\nolimits_i {{\lambda _i}\left\| {{\Omega _c}\left( {|{\psi _i}\rangle \langle {\psi _i}|} \right)} \right\|} \ .
\end{equation}
We have $\left\| {{\Omega _c}\left( {{\rho _*}} \right)} \right\| \leqslant \sum\nolimits_i {{\lambda _i}\left\| {{\Omega _c}\left( {|{\psi _i}\rangle \langle {\psi _i}|} \right)} \right\|}$,
which implies the result advertised in Eq.~(\ref{max_rho_2_max_pure}). The case for ${\Omega _q}$ is similar.
\end{proof}

\end{document}